\documentclass{article}


\usepackage[nonatbib,final]{neurips_2021}




\usepackage[utf8]{inputenc}
\usepackage[T1]{fontenc}
\usepackage{hyperref}
\usepackage{url}
\usepackage{booktabs}
\usepackage{amsfonts,amsmath,amssymb}
\usepackage{nicefrac}
\usepackage{microtype}
\usepackage[dvipsnames]{xcolor}
\usepackage{bm}
\usepackage{bbm}
\usepackage{mathtools}
\usepackage[numbers,sort&compress]{natbib}
\usepackage[shortlabels]{enumitem}
\usepackage{tikz}
\usepackage{graphics}
\usepackage{caption}
\usepackage{subcaption}
\usepackage{cleveref}  

\graphicspath{{./fig/}}

\definecolor{SteelBlue}{RGB}{70, 130, 180}
\definecolor{Green}{RGB}{0,128,0}
\definecolor{PinkColor}{RGB}{255, 140, 230}
\hypersetup{colorlinks,linkcolor=,urlcolor=SteelBlue,citecolor=}

\newcommand*{\horzbar}{\rule[.5ex]{2.5ex}{0.5pt}}

\DeclareMathOperator{\argmin}{\arg\!\min}
\DeclareMathOperator{\argmax}{\arg\!\max}
\DeclareMathOperator{\E}{\mathbb{E}}
\DeclareMathOperator{\prob}{\mathbb{P}}

\newcommand{\ind}{\mathbbm{1}}

\newcommand{\N}{\mathbb{N}}
\newcommand{\R}{\mathbb{R}}

\newcommand{\unif}{\textrm{Unif}}
\newcommand{\gauss}{\mathcal{N}}

\newcommand{\given}{\, | \,}

\newcommand{\nSteps}{T}

\newcommand{\nNomin}{N}

\newcommand{\step}{t}
\newcommand{\regret}{\mathrm{R}}
\newcommand{\bound}{\mathrm{B}}

\newcommand{\ctxt}{x}
\newcommand{\rwrd}{r}
\newcommand{\action}{a}
\newcommand{\policy}{\pi}

\newcommand{\meanRwrd}{\Bar{\rwrd}}

\newcommand{\nomCtxt}[3]{\ctxt_{#1, #2 #3}}
\newcommand{\nomAction}[2]{\action_{#1, #2}}

\newcommand{\regressClass}{\mathcal{F}}
\newcommand{\actionSpace}{\mathcal{A}}
\newcommand{\ctxtSpace}{\mathcal{X}}

\newcommand{\nomin}{\mathtt{N}}
\newcommand{\ranker}{\mathtt{R}}
\newcommand{\twostage}{\mathtt{2s}}

\newcommand{\nomAbbrev}{n}
\newcommand{\actionPool}[1]{\actionSpace_{#1}}
\newcommand{\candidatePool}[1]{\mathcal{C}_{#1}}

\newcommand{\conditionalMean}{f^\star}

\newtheorem{proposition}{Proposition}

\newtheorem{assumption}{Assumption}
\newtheorem{lemma}{Lemma}
\newtheorem{proof}{Proof}

\title{On component interactions in\linebreak two-stage recommender systems}

\author{%
  Jiri Hron \\
  University of Cambridge \\
  \And
  Karl Krauth \\
  UC Berkeley \\
  \And
  Michael~I.~Jordan \\
  UC Berkeley \\
  \And
  Niki Kilbertus \\
  Technical University of Munich \\
  Helmholtz AI, Munich
}

\begin{document}

\maketitle

\begin{abstract}
  Thanks to their scalability, two-stage recommenders are used by many of today's largest online platforms, including YouTube, LinkedIn, and Pinterest.
  These systems produce recommendations in two steps: (i)~multiple \emph{nominators}---tuned for low prediction latency---preselect a small subset of candidates from the whole item pool; (ii)~a slower but more accurate \emph{ranker} further narrows down the nominated items, and serves to the user.
  Despite their popularity, the literature on two-stage recommenders is relatively scarce, and the algorithms are often treated as mere sums of their parts.
  Such treatment presupposes that the two-stage performance is explained by the behavior of the individual components in isolation.
  This is not the case:
  using synthetic and real-world data, we demonstrate that interactions between the ranker and the nominators substantially affect the overall performance.
  Motivated by these findings, we derive a generalization lower bound which shows that independent nominator training can lead to performance on par with uniformly random recommendations.
  We find that careful design of item pools, each assigned to a different nominator, alleviates these issues.
  As manual search for a good pool allocation is difficult, we propose to learn one instead using a Mixture-of-Experts based approach.
  This significantly improves both precision and recall at $K$.
\end{abstract}

\section{Introduction}
\label{sect:intro}

Recommender systems play a central role in online ecosystems, affecting what media we consume, which products we buy, or even with whom we interact.
A key technical challenge is ensuring these systems can sift through billions of items to deliver a personalized experience to millions of users with \emph{low response latency}.
A widely adopted solution to this problem are two-stage recommender systems \citep{borisyuk2016casmos,covington2016deep,eksombatchai2018pixie,zhao2019recommending,yi2019sampling} where (i)~a set of computationally efficient \emph{nominators} (or \emph{candidate generators}) preselects
a small number of candidates, which are then (ii)~further narrowed down, reranked, and served to the user by a slower but more statistically accurate \emph{ranker}.

Nominators are often heterogeneous, ranging from associative rules to recurrent neural networks \citep{chen2019top}.
A popular choice are matrix factorization \citep{mnih2007probabilistic,koren2009matrix} and two-tower \citep{yi2019sampling} architectures which model user feedback by the dot product between user and item embeddings.
While user embeddings often evolve with the changing context of user interactions, item embeddings can typically be precomputed before deployment.
The cost of candidate generation is thus dominated by the (approximate) computation of the embedding dot products.
In contrast, the ranker often takes \emph{both} the user and item features as input, making the computational cost linear in the number of items even at deployment \citep{covington2016deep,ma2020off}.

With few exceptions \citep{kang2019candidate,ma2020off,hron2020exploration}, two-stage specific literature is sparse compared to that on \emph{single-stage} systems (i.e., recommenders which do not construct an explicit candidate set within a separate nominator stage \citep[e.g.,][]{he2017neural,rendle2010factorization,chen2017attentive,koren2009matrix,mnih2007probabilistic,schnabel2016recommendations,ie2019reinforcement}).
This is especially concerning given the considerable ethical challenges entailed by the enormous reach of two-stage systems:
according to the recent systematic survey by \citet{milano2020recommender}, recommender systems have been (partially) responsible for unfair treatment of disadvantaged groups, privacy leaks, political polarization, spread of misinformation, and `filter bubble' or `echo chamber' effects.
While many of these issues are primarily within the realm of `human--algorithm' interactions, the additional layer of `algorithm--algorithm' interactions introduced by the two-stage systems poses a further challenge to understanding and alleviating them.

The main aim of our work is thus to narrow the knowledge gap between single- and two-stage systems, particularly in the context of \emph{score-based} algorithms.
Our main contributions are:
\begin{enumerate}[topsep=0pt]
    \item We show two-stage recommenders are significantly affected by interactions between the ranker and the nominators over a variety of experimental settings (\Cref{sect:emp_observations}).
    
    \item We investigate these interactions theoretically (\Cref{sect:theoretical_observations}), and find that while independent ranker training typically works well (\Cref{prop:ranker_regret}), the same is not the case for the nominators where two popular training schemes can both result in performance no better than that of a uniformly random recommender (\Cref{prop:train_all_own_fail}).
    
    \item Responding to the highlighted issues with \emph{independent} training, we identify specialization of nominators to smaller subsets of the item pool as a source of potentially large performance gains.
    We thus propose a \emph{joint} Mixture-of-Experts \citep{jacobs1991adaptive,jordan1992hierarchies} style training which treats each nominator as the expert for its own item subset.
    The ability to learn the item pool division alleviates the issues caused by the typically lower modeling capacity of the nominators, and empirically leads to improved precision and recall at K (\Cref{sect:moe}).
\end{enumerate}

\section{Two-stage recommender systems}\label{sect:setup}

The goal of recommender systems is to learn a policy $\policy$ which maps contexts $\ctxt \in \ctxtSpace$ to distributions $\policy(\ctxt)$ over a finite set of items (or \emph{actions}) $\action \in \actionSpace \, ,$ such that the expected reward $\E_{\ctxt} \E_{\action \sim \policy(\ctxt)} [\rwrd_\action \given \ctxt]$ is maximized.
The context $\ctxt$ represents information about the user and items (e.g., interaction history, demographic data), and $\rwrd_\action$ is the user feedback associated with item $\action$ (e.g., rating, clickthrough, watch-time).
We assume that $\rwrd_\action | \ctxt$ has a well-defined fixed mean
$
    \conditionalMean(\ctxt, \action) 
    \coloneqq 
    \E [\rwrd_\action \given \ctxt]
$
for all the $(\ctxt, \action)$ pairs.
To simplify, we further assume only one item $\action_\step$ is to be recommended for each given context $\ctxt_\step$, where $\step \in [\nSteps]$ with $[\nSteps] \coloneqq \{ 1, \ldots , \nSteps \}$ for $\nSteps \in \N$.

Two-stage systems differ from the single-stage ones by the two-step strategy of selecting $\action_\step$.
First, each nominator $\nomAbbrev \in [\nNomin]$ picks a single candidate $\nomAction{\nomAbbrev}{\step}$ from its \emph{assigned pool} $\actionPool{\nomAbbrev} \subseteq \actionSpace$ ($\actionPool{\nomAbbrev} \neq \varnothing$, $\bigcup_{\nomAbbrev} \actionPool{\nomAbbrev} = \actionSpace$).
Second, the ranker chooses an item $\action_\step$ from the \emph{candidate pool} $\candidatePool{\step} \coloneqq \{ \nomAction{1}{\step} \, , \ldots , \nomAction{\nNomin}{\step} \}$,
and observes the reward $\rwrd_\step = \rwrd_{\step \action_\step}$ associated with $\action_\step$.
Since the goal is \emph{expected} reward maximization, recommendation quality can be measured by \emph{instantaneous regret} $\rwrd_{\step}^\star - \rwrd_\step$ where $\rwrd_{\step}^\star = \rwrd_{\step \action_\step^\star}$ is the reward associated with an optimal arm $\smash{\action_\step^\star \in \argmax_{\action \in \actionSpace} \conditionalMean(\ctxt_\step, \action)}$.
This leads us to an important identity for the \emph{(cumulative) regret} in two-stage systems which is going to be used throughout:
\begin{align}\label{eq:2s_regret_decomposition}
    \regret_\nSteps^\twostage
    =
    \sum_{\step = 1}^\nSteps
        \rwrd_{\step}^\star
        -
        \rwrd_{\step}
    =
    \underbrace{\sum_{\step = 1}^\nSteps
        (
            \rwrd_{\step}^\star
            -
            \rwrd_{\step \Tilde{\action}_\step}
        )
    }_{\eqqcolon \regret_{\nSteps}^\nomin}
    +
    \underbrace{\sum_{\step = 1}^\nSteps
        (
            \rwrd_{\step \Tilde{\action}_\step}
            -
            \rwrd_{\step}
        )
    }_{\eqqcolon \regret_\nSteps^\ranker}
    \, ,
\end{align}
with $\Tilde{\action}_\step \in \argmax_{\action \in \candidatePool{\step}} \conditionalMean(\ctxt_\step, \action)$.
In words, $\regret_{\nSteps}^\nomin$ is the \emph{nominator regret}, quantifying the difference between the best action presented to the ranker $\Tilde{\action}_\step$ and the best overall action $\action_\step^\star$, and $\regret_{\nSteps}^\ranker$ is the \emph{ranker regret} which measures the gap between the choice of the ranker $\action_\step$ and the best action in the candidate set $\candidatePool{\step}$.
The two-stage recommendation process is summarized in \Cref{fig:setup_and_amazon} (left).


\begin{figure}[tbp]
\centering
\begin{subfigure}{0.5\textwidth}
  \centering
  \includegraphics[keepaspectratio,width=0.65\linewidth]{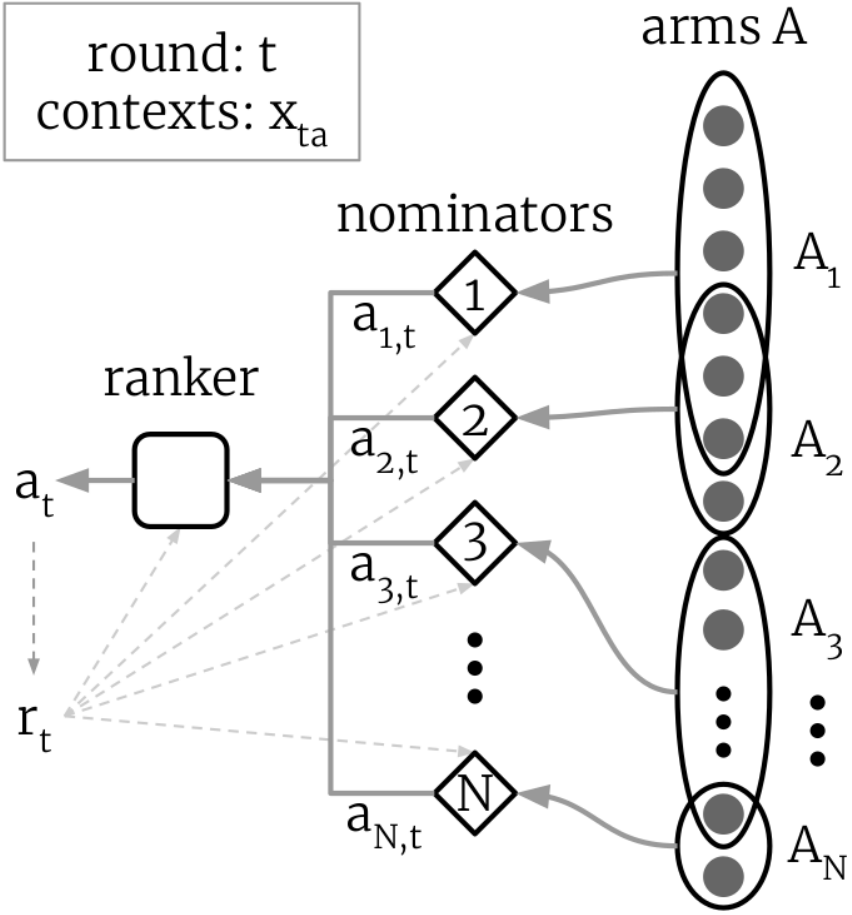}
\end{subfigure}%
\begin{subfigure}{0.5\textwidth}
  \centering
  \includegraphics[keepaspectratio,width=0.9\linewidth]{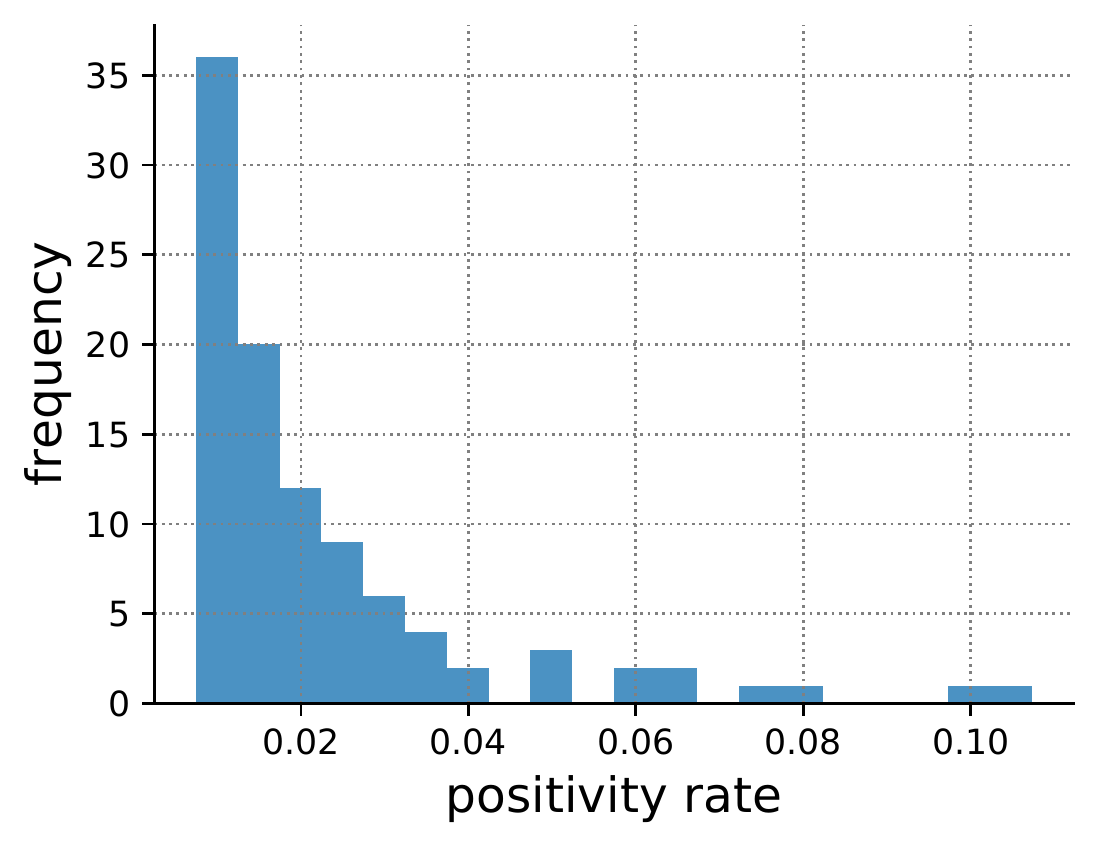}
\end{subfigure}
\captionof{figure}{\textbf{Left: }The two-stage recommendation setup. \textbf{Right:} Amazon reward histogram. The top 5 arms are responsible for 19.22\% whereas the bottom 50 only for 19.85\% of the positive rewards.}
\label{fig:setup_and_amazon}
\end{figure}

While \Cref{eq:2s_regret_decomposition} is typical for the \emph{bandit} literature (data collected interactively, only $\rwrd_{\step \action_\step}$ revealed for each $\step$), we also consider the \emph{supervised} learning case (a dataset with all $\{\rwrd_{\step\action}\}_{\step\action}$ revealed is given).
In particular, \Cref{sect:comparison} presents an empirical comparison of single- and two-stage systems in the bandit setting, followed by a theoretical analysis with implications for both the learning setups.

\section{Comparing single- and two-stage systems}\label{sect:comparison}

Since the ranker and nominators could each be deployed independently, one may wonder whether the performance of a two-stage system is significantly affected by factors beyond the hypothetical single-stage performance of its components.
This question is both theoretically (new developments needed?) and practically interesting (e.g., training components independently, as common, assumes targeting single-stage behavior is optimal).
In \Cref{sect:emp_observations}, we empirically show that while factors known from the single-stage literature also affect two-stage systems, there are \emph{two-stage specific properties} which can be even more important.
\Cref{sect:theoretical_observations} then investigates these properties theoretically, revealing a non-trivial interaction between the nominator training objective and the item pool allocations $\{ \actionPool{\nomAbbrev} \}_\nomAbbrev$.

\subsection{Empirical observations}\label{sect:emp_observations}

\subsubsection*{Setup}

We study the effects of item pool size, dimensionality, misspecification, nominator count, and the choice of ranker and nominator algorithms \emph{in the bandit setting}.
We compare single- and two-stage systems where each (component) models the expected reward as a \emph{linear} function
$\smash{f_\step(\ctxt, \action) = \langle \hat{\theta}_\step, \ctxt_{\action} \rangle}$ ($\ctxt_{\action}$ differs for each $\action$).
Abbreviating $\ctxt_\step = \ctxt_{\step \action_\step}$,
the estimates are converted into a policy via either:
\begin{enumerate}[topsep=0pt]
    \item \textbf{UCB (U)} \citep{auer2002using,li2010contextual} which computes the ridge-regression estimate with regularizer $\lambda > 0$
    \begin{align}\label{eq:ridge_regression}
        \hat{\theta}_{\step}
        \coloneqq
        \Sigma_\step
        \sum_{i=1}^{\step - 1} \ctxt_{i} \rwrd_{i}
        \, ,
        \qquad
        \Sigma_\step
        \coloneqq
        \biggl(\lambda I + \sum_{i=1}^{\step - 1} \ctxt_{i} \ctxt_{i}^\top\biggr)^{-1}
        \, ,
    \end{align}
    and selects actions with exploration bonus $\alpha > 0$:
    $\smash{
        \action_\step 
        \in 
        \argmax_{\action} \, 
            \langle 
                \hat{\theta}_\step,
                \ctxt_{\step \action}
            \rangle 
            + 
            \alpha 
            \sqrt{
                \ctxt_{\step \action}^\top
                \Sigma_\step 
                \ctxt_{\step \action}
            } \, .
    }$
    
    \item \textbf{Greedy (G)} \citep{kannan2018smoothed,bastani2021mostly} which can be viewed as a special case of UCB with $\alpha = 0$.
\end{enumerate}
The $\argmax$ is restricted to $\actionPool{\nomAbbrev}$ (resp.\ $\candidatePool{\step}$) in two-stage systems, with pool allocation $\{\actionPool{\nomAbbrev}\}_{\nomAbbrev}$ designed to minimize overlaps and approximately equalize the number of items in each pool (see \Cref{app:implementation}).

We chose to make the above restrictions of our experimental setup to limit the large number of design choices two-stage recommenders entail (architecture and hyperparameters of each nominator and the ranker, item pool allocation, number of nominators, etc.), and with that isolate the variation in performance to only a few factors of immediate interest.

We use one synthetic and one real-world dataset.
The \textbf{synthetic dataset} is generated using a linear model $\rwrd_{\step \action} = \langle \theta^\star, \ctxt_{\step\action} \rangle + \varepsilon_{\step\action}$ for each time step $\step \in [\nSteps]$ and action $\action \in \actionSpace$.
The vector $\theta^\star$ is drawn uniformly from the $d$-dimensional unit sphere at the beginning and then kept fixed, the contexts $\ctxt_{\step \action}$ are sampled independently from $\gauss (0, I)$,
and $\varepsilon_{\step\action} \sim \gauss(0, 0.01)$ is independent observation noise.

The \textbf{real-world dataset}
`AmazonCat--13K' contains Amazon reviews and the associated product category labels \citep{mcauley2013hidden,bhatia16}.\footnote{We did not use `MovieLens' \citep{harper2015movielens} since it contains little useful contextual information as evidenced by its absence even from the state-of-the-art models \citep{rendle2019difficulty}.
Two-stage recommenders are only used when context matters as otherwise all recommendations could be precomputed and retrieved from a database at deployment.}
Since `AmazonCat--13K' is a multi-label classification dataset, we convert it into a bandit one by assigning a reward of one for correctly predicting any one of the categories to which the product belongs, and zero otherwise.
An $|\actionSpace|$--armed linear bandit is then created by sampling $|\actionSpace|$ reviews uniformly from the whole dataset, and treating the associated features as the contexts $\{ \ctxt_{\step \action} \}_{\action \in \actionSpace}$.
This method of conversion is standard in the literature \citep{dudik2011doubly,gentile2014multilabel,lopez2020learning, ma2020off}.

We use only the raw text features, and convert them to $768$-dimensional embeddings using the HuggingFace pretrained model \texttt{`bert-base-uncased'} \citep{devlin2019bert,wolf2020transformers};\footnote{Encoded dataset: \url{https://twostage.s3-us-west-2.amazonaws.com/amazoncat-13k-bert.zip}.}
we further subset to the first $d=400$ dimensions of the embeddings, which does not substantially affect the results.
Because we are running thousands of different experiment configurations (counting the varying seeds), we further reduce the computational complexity by subsetting from the overall 13K to only 100 categories.
Since most of the products belong to 1--3 categories, we take the categories with 3\textsuperscript{rd} to 102\textsuperscript{nd} highest occurrence frequency.
This ensures less than 5\% of the data points belong to none of the preserved categories, and overall 10.73\% reward positivity rate with strong power law decay (\Cref{fig:setup_and_amazon}, right).


While the ranker can always access all $d$ features, the usual lower flexibility of the nominators (\emph{misspecification}) is modelled by restricting each to a different \emph{random subset} of $s$ out of the total $d$ features on both datasets.
This is equivalent to forcing the corresponding regression parameters to be zero.
Both UCB and Greedy are then run with $\ctxt_\step$ replaced by the $s$-dimensional $\nomCtxt{\nomAbbrev}{\step}{} = \nomCtxt{\nomAbbrev}{\step}{\action_\step}$ everywhere.
The same restriction is applied to the single-stage systems for comparison.
In all rounds, each nominator is updated with $(\nomCtxt{\nomAbbrev}{\step}{}, \action_{\step}, \rwrd_\step)$, regardless of whether $\action_\step \in \actionPool{\nomAbbrev}$ (this assumption is revisited in \Cref{sect:theoretical_observations}).
Thirty independent random seeds were used to produce the (often barely visible) two-sigma standard error regions in \Cref{fig:one_two_stage_comparison_synth,fig:one_two_stage_comparison_amazon}.
More details---including the hyperparameter tuning protocol and additional results---can be found in \Cref{app:implementation,app:results}.

\begin{figure}[tbp]
  \centering
  \begin{tikzpicture}
    \node[anchor=south west,inner sep=0] (image) at (0,0) {\includegraphics[width=0.49\columnwidth]{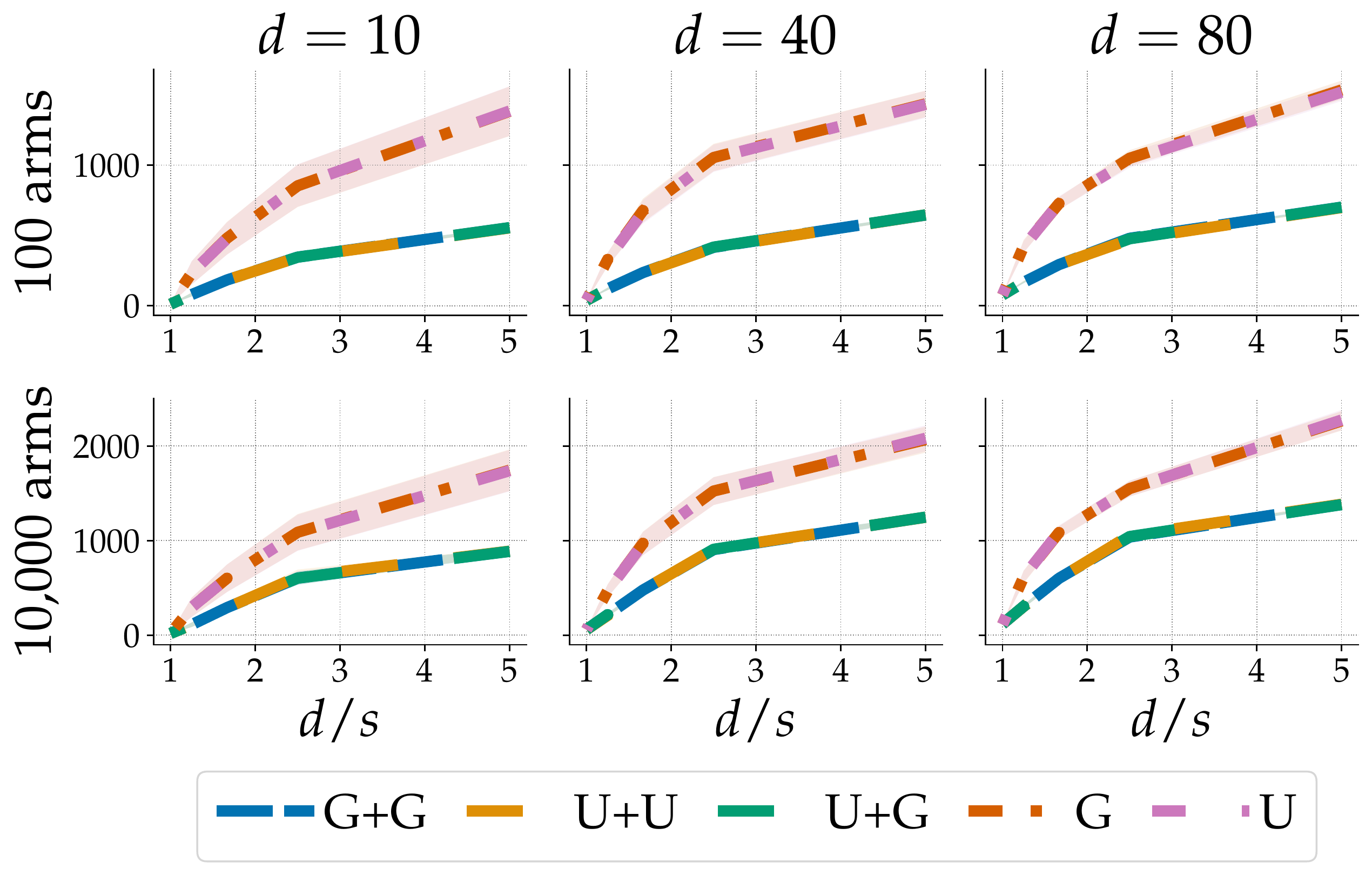}};
    \begin{scope}[x={(image.south east)},y={(image.north west)}]
    \node at (0.05,0.07) {\textbf{(a)}};
    \end{scope}
  \end{tikzpicture}
  \hfill
  \begin{tikzpicture}
    \node[anchor=south west,inner sep=0] (image) at (0,0) {\includegraphics[width=0.49\columnwidth]{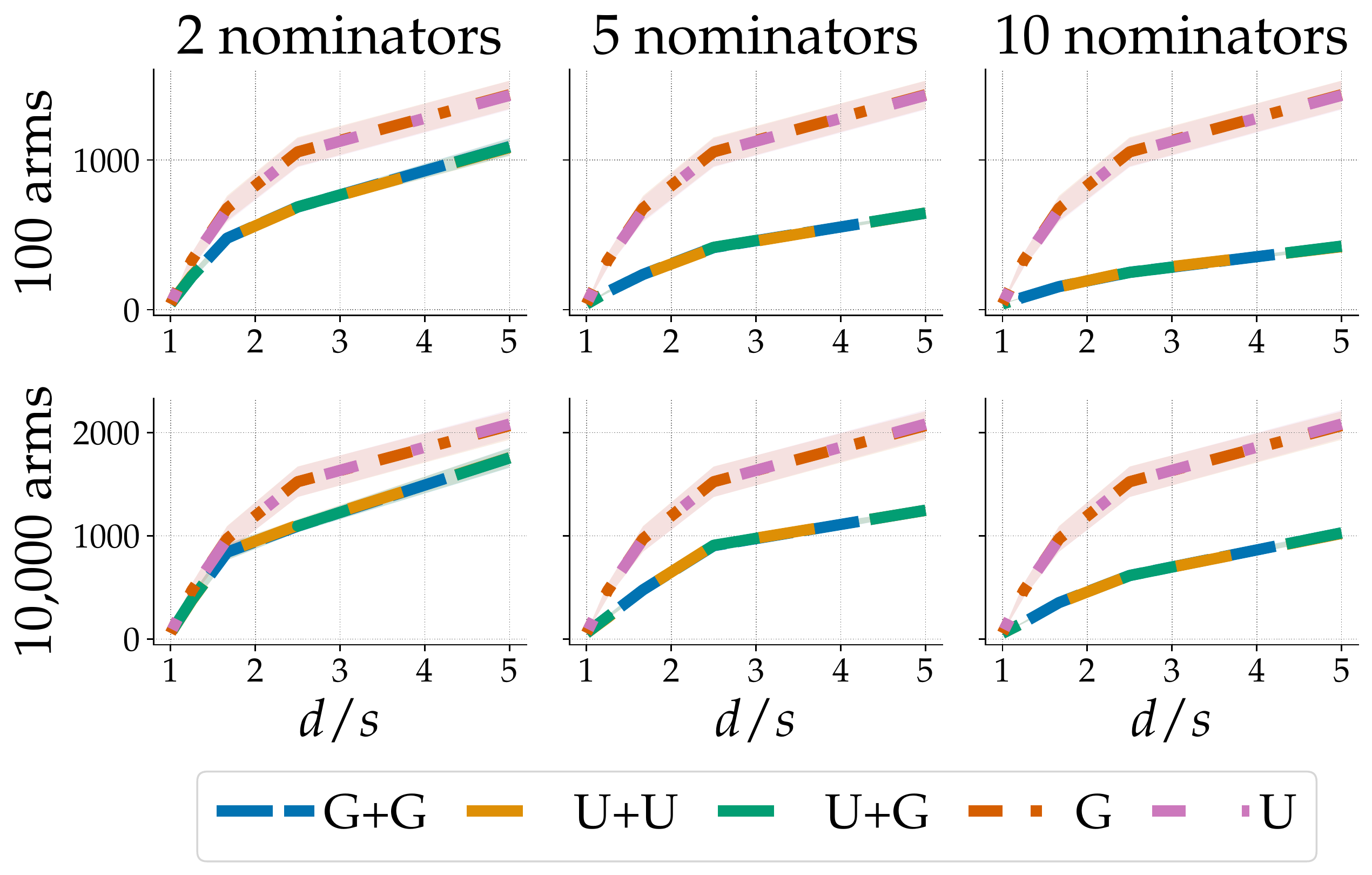}};
    \begin{scope}[x={(image.south east)},y={(image.north west)}]
    \node at (0.05,0.07) {\textbf{(b)}};
    \end{scope}
  \end{tikzpicture}
  \caption{Synthetic data results.
  The x-axis is the ratio between the true feature dimension $d$ and the size of the subset available to the nominators and the single-stage systems $s$.
  The y-axis shows the expected regret at $\nSteps = 1000$.
  In plot~\textbf{(a)}, $\nNomin = 5$ nominators are used, and columns represent the total number of features $d$.
  In plot~\textbf{(b)}, $d=40$ features are used, and columns show the number of nominators $\nNomin$.
  The legend describes model architectures, where two-stage systems are labeled by 
  \texttt{`[ranker]+[nominator]'} (e.g., \texttt{`U+G'} is a UCB ranker with Greedy nominators).}
  \label{fig:one_two_stage_comparison_synth}
\end{figure}

\subsubsection*{Results}

Starting with the synthetic results in \Cref{fig:one_two_stage_comparison_synth},
we see that the number of arms and the feature dimension $d$ are both correlated with increased regret in single- \emph{and} two-stage systems.
Another similarity between all the algorithms is that misspecification---as measured by $d / s$---also has a significant effect on performance.\footnote{Misspecification error typically translates into a linear regret term $\epsilon \nSteps$ \citep{crammer2013multiclass,ghosh2017misspecified,foster2018practical,lattimore2020learning,du2020good,foster2020misspecification,krishnamurthy2021tractable}.
We can thus gain some \emph{intuition} for the concavity of $\smash{d/s \mapsto \regret_\nSteps}$ from the $L^2$ error
$\smash{\epsilon = \min_{\theta_\nomAbbrev} (\E [(\rwrd_{\action} - \langle \theta_{\nomAbbrev}, \nomCtxt{\nomAbbrev}{}{\action}\rangle)^2])^{1/2}}$ where $\action \sim \unif(\actionSpace)$ \citep{krishnamurthy2021tractable}.
Using $\ctxt_{\step\action} \sim \gauss(0, I)$,
the minimum is achieved by $(\E [\nomCtxt{\nomAbbrev}{}{\action} \nomCtxt{\nomAbbrev}{}{\action}^\top])^{-1} \E [\nomCtxt{\nomAbbrev}{}{\action} \rwrd_{\action}] = \theta_\nomAbbrev^\star$, with $\theta_\nomAbbrev^\star$ the $s$ entries of $\theta^\star$ corresponding to the dimensions available to $\nomAbbrev$.
The $L^2$ error is thus a concave function of $d / s$ by symmetry: $\epsilon = \sqrt{\E [ \E [(\rwrd_{\step\action} - \langle \theta_\nomAbbrev^\star , \nomCtxt{\nomAbbrev}{\step}{\action} \rangle)^2 \given \theta^\star] ]} = \sqrt{\E [\| \theta^\star \|_2^2 - \| \theta_\nomAbbrev^\star \|_2^2]} = \sqrt{1 - s / d}$.}
This is also the case for the Amazon dataset in \Cref{fig:one_two_stage_comparison_amazon}.

The influence of the number of arms, dimensionality, and misspecification on single-stage systems is well known \citep{lattimore2020bandit}. \Cref{fig:one_two_stage_comparison_synth,fig:one_two_stage_comparison_amazon} suggest similar effects also exist for two-stage systems.
On the other hand, while the directions of change in regret agree, the magnitudes do not.
In particular, two-stage systems perform significantly better than their single-stage counterparts.
This is possible because the ranker can exploit its access to all $d$ features to improve upon even the best of the nominators (recall that nominators and single-stage systems share the same model architecture).
In other words, \emph{the single-stage performance of individual components does not fully explain the two-stage behavior}.

To develop further intuition about the differences between single- and two-stage systems, we turn our attention to the Amazon experiments in \Cref{fig:one_two_stage_comparison_amazon}.
The top row suggests the performance of two-stage systems improves as the number of nominators grows.
Strikingly, the accompanying UCB ranker + nominator plots in the bottom row show the nominator regret $\regret_\nSteps^{\nomin}$ dominates when there are few nominators, but gives way to the ranker regret $\regret_\nSteps^{\ranker}$ as their number increases.

To explain why, first note that the single-stage performance of the ranker can be read off from the bottom left corner of each plot where $d = s$ (because all the components are identical at initialization, and then updated with the same data).
Since the size of the candidate set $\candidatePool{\step}$ increases with the number of nominators, the two-stage performance in the $d > s$ case eventually approaches that of the single-stage UCB ranker as well, even if the nominators are no better than random guessing.
In fact, because $\approx 10\%$ of the items yield optimal reward, the probability that a set of ten uniformly random nominators with non-overlapping item pools nominates at least one optimal arm is on average $1 - (\frac{9}{10})^{10} \approx 0.65$, i.e., the instantaneous nominator regret would be zero 65\% of the time.

To summarize, we have seen evidence that properties known to affect single-stage performance---number of arms, feature dimensionality, misspecification---have similar qualitative effects on two-stage systems.
However, two-stage recommenders perform significantly better than any of the nominators alone, especially as the nominator count and the size of the candidate pool increase.
Complementary to the evidence from offline learning \citep{ma2020off}, and the effect of ranker pretraining \citep{hron2020exploration}, these observations add to the case that two-stage systems should not be treated as just the sum of their parts.
We add theoretical support to this argument in the next section.

\begin{figure}[tbp]
  \centering
  \begin{tikzpicture}
    \node[anchor=south west,inner sep=0] (image) at (0,0) {\includegraphics[width=0.49\columnwidth]{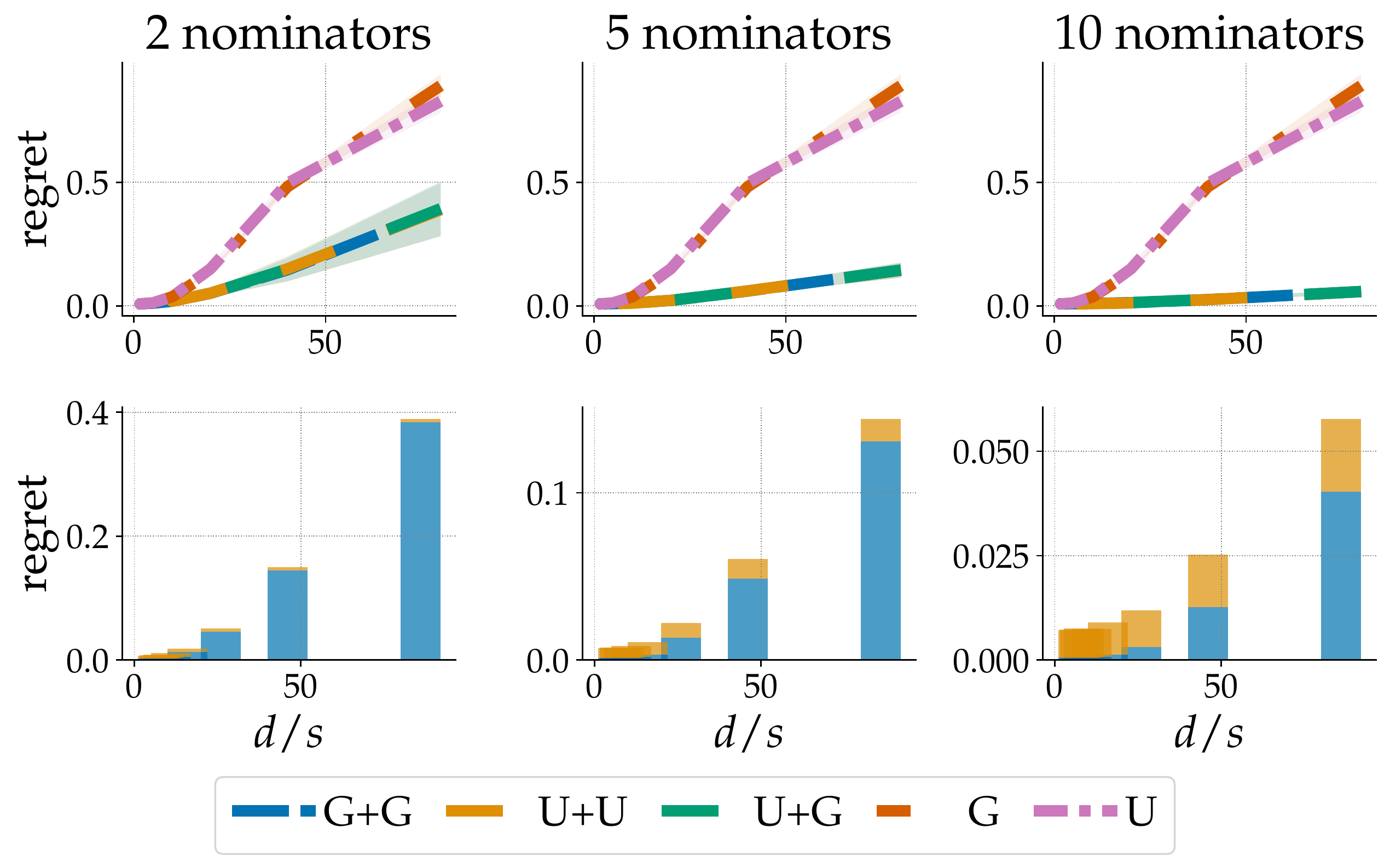}};
    \begin{scope}[x={(image.south east)},y={(image.north west)}]
    \node at (0.05,0.07) {\textbf{(a)}};
    \node at (0.5, 1.05) {{\footnotesize \color{gray} 100 arms}};
    \end{scope}
  \end{tikzpicture}
  \hfill
  \begin{tikzpicture}
    \node[anchor=south west,inner sep=0] (image) at (0,0) {\includegraphics[width=0.49\columnwidth]{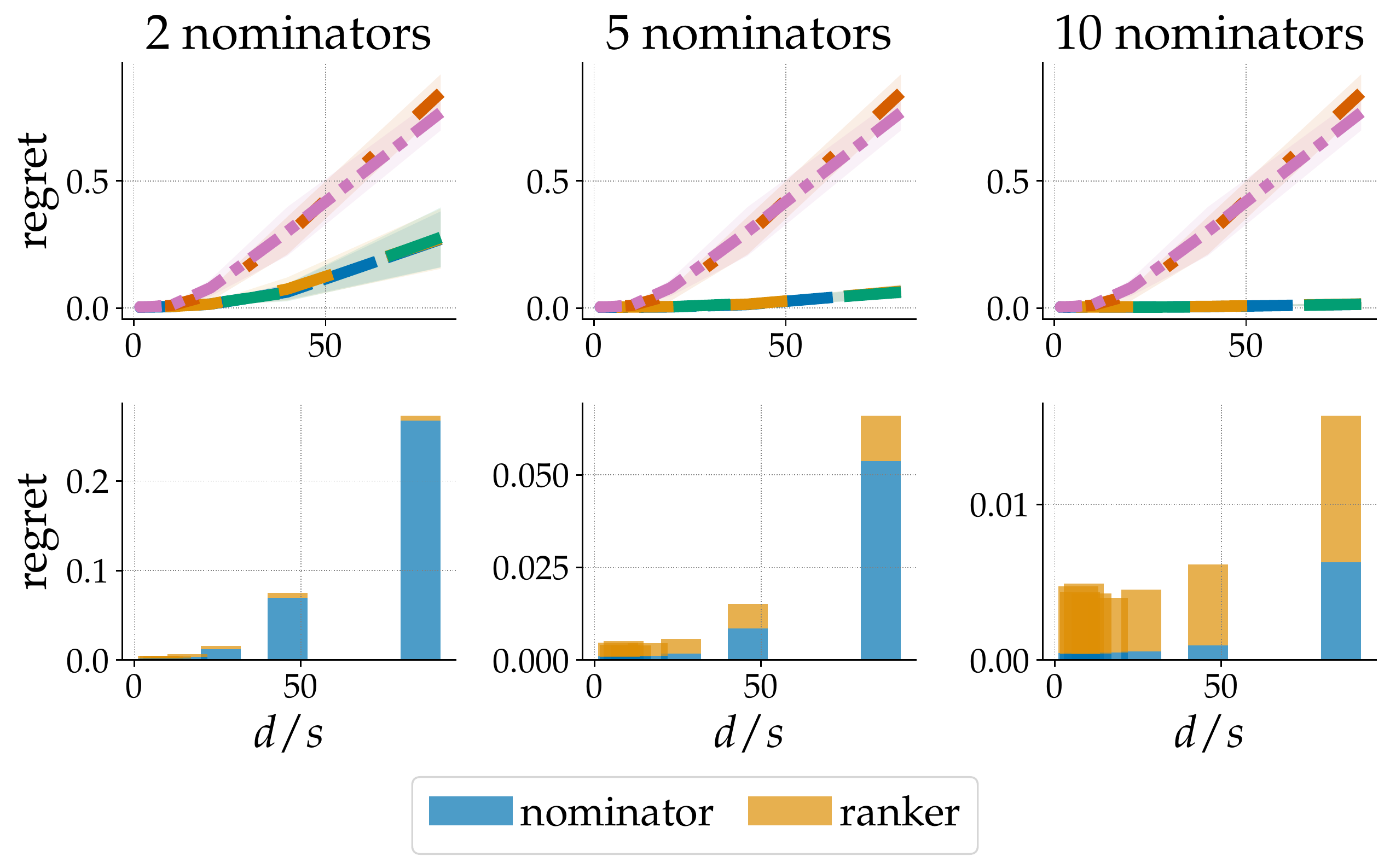}};
    \begin{scope}[x={(image.south east)},y={(image.north west)}]
    \node at (0.05,0.07) {\textbf{(b)}};
    \node at (0.5, 1.05) {{\footnotesize \color{gray} 1000 arms}};
    \end{scope}
  \end{tikzpicture}
  \caption{Amazon data results. 
  The axes are the same as in \Cref{fig:one_two_stage_comparison_synth}, except the $y$-axis is plotted at $\nSteps = 5000$ with the regret divided by that of a uniformly random agent.
  The feature dimension is fixed to $d = 400$, and the number of arms to $100$ in plot~\textbf{(a)}, and to $1000$ in plot~\textbf{(b)}.
  The columns represent varying number of nominators $\nNomin$. The legend is shared, where the one in~\textbf{(a)} corresponds to the top row plots and has the same interpretation as in \Cref{fig:one_two_stage_comparison_synth}, and the one in~\textbf{(b)} belongs to the bottom row plots which show the proportion of ranker and nominator regret (see \Cref{eq:2s_regret_decomposition}) for the \texttt{`U+U'} two-stage system as a representative example (all two-stage systems perform similarly here).}
  \label{fig:one_two_stage_comparison_amazon}
\end{figure}

\subsection{Theoretical observations}\label{sect:theoretical_observations}

The focus of \Cref{sect:emp_observations} was on linear models in the bandit setting.
We lift the bandit assumption later in this section, and relax the class of studied models to \emph{least-squares regression oracle} (LSO) based algorithms, which estimate the expected reward by minimizing the sum of squared errors and a regularizer $\| \cdot \|_\regressClass$ over a given model class $\regressClass$
\begin{align}\label{eq:train_on_all}
    f_\step
    \in 
    \argmin_{f \in \regressClass}
    \biggl\{
        \| f \|_\regressClass
        +
        \sum_{i = 1}^{\step - 1}
            (\rwrd_i - f(\ctxt_i, \action_i))^2
    \biggr\}
    \, .
\end{align}
These estimates are then converted into a policy either greedily, $\policy_\step(\ctxt) = \unif(\argmax_{\action} f_\step(\ctxt, \action))$, or by incorporating an exploration bonus as in LinUCB \citep{auer2002using,dani2008stochastic,li2010contextual,rusmevichientong2010linearly,chi2011contextual}, or the more recent class of black-box reductions from bandit to online or offline regression \citep{foster2018practical,foster2020beyond,foster2020misspecification,simchi2020bypassing,krishnamurthy2021tractable}.
The resulting algorithms are often minimax optimal, and (some) also perform well on real-world data \citep{bietti2018contextual}.

We choose LSO based algorithms because they (i)~include the Greedy and (Lin)UCB models studied in the previous section, and (ii)~allow for an easier exposition than the similarly popular cost-sensitive classification approaches \citep[e.g.,][]{langford2007epoch,dudik2011monster,agarwal2014taming,covington2016deep,chen2019top}.
The following proposition is an application of the fact that algorithms like LinUCB or SquareCB \citep{abe1999associative,foster2020beyond,foster2020misspecification} provide regret guarantees robust to contexts chosen by an \emph{adaptive} adversary, and thus also to those chosen by the nominators.
\begin{proposition}\label{prop:ranker_regret}
    Assume the ranker achieves a \emph{single-stage} regret guarantee $\regret_\nSteps \leq \bound_\nSteps^\ranker$ for some constant $\bound_\nSteps^\ranker \in \R$ (either in expectation or with high probability), even if the contexts $\{ \ctxt_\step \}_{\step=1}^\nSteps$ are chosen by an adaptive adversary.
    The ranker regret then satisfies
    \begin{align*}
        \regret_{\nSteps}^{\ranker} = \sum_{\step=1}^\nSteps \rwrd_{\step \Tilde{\action}_\step} - \rwrd_{\step}
        \leq
        \bound_\nSteps^\ranker
        \, ,
    \end{align*}
    in the sense of the original bound (i.e., in expectation, or with high probability).
\end{proposition}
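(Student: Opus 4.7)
The plan is to reduce Proposition~\ref{prop:ranker_regret} to its hypothesis by viewing the ranker's sub-problem as a single-stage contextual bandit whose per-round action set is chosen adaptively by the nominators.

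First, I would absorb the candidate pool $\candidatePool{\step}$ into the information presented to the ranker at each round. At step $\step$ the ranker effectively operates on the restricted action set $\candidatePool{\step}$, which one can encode either by augmenting the context to the pair $(\ctxt_\step, \candidatePool{\step})$, or by taking the per-arm inputs $\{\ctxt_{\step \action}\}_{\action \in \candidatePool{\step}}$ as the step-$\step$ input. Because each $\nomAction{\nomAbbrev}{\step}$ is a measurable function of the history $\{(\ctxt_i, \action_i, \rwrd_i)\}_{i < \step}$ together with $\ctxt_\step$, the sequence $(\candidatePool{\step})_{\step \le \nSteps}$ is adapted to the ranker's filtration. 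The nominators therefore act as exactly the kind of adaptive adversary covered by the assumption.

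Second, I would invoke the hypothesized single-stage bound on this reformulated problem. That bound has the form
\[
    \sum_{\step=1}^\nSteps
        \biggl(
            \max_{\action \in \mathrm{avail}_\step} \conditionalMean(\ctxt_\step, \action)
            - \conditionalMean(\ctxt_\step, \action_\step)
        \biggr)
    \le \bound_\nSteps^\ranker,
\]
for any adaptively chosen per-round available set $\mathrm{avail}_\step$. Setting $\mathrm{avail}_\step = \candidatePool{\step}$ identifies the comparator with $\conditionalMean(\ctxt_\step, \Tilde{\action}_\step)$ by the definition $\Tilde{\action}_\step \in \argmax_{\action \in \candidatePool{\step}} \conditionalMean(\ctxt_\step, \action)$, so the left-hand side becomes $\regret_\nSteps^\ranker$ (in expectation, and then the realized version follows from the zero-mean noise built into $\conditionalMean(\ctxt, \action) = \E[\rwrd_\action \mid \ctxt]$; the high-probability variant inherits the concentration step from the original bound's proof).

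The main obstacle I anticipate is not conceptual but bookkeeping: I would need to check that the cited single-stage guarantees (LinUCB, SquareCB, and the black-box reductions referenced earlier in the excerpt) really state their regret against the best arm in the instantaneously available action set, rather than against a fixed global maximum over $\actionSpace$. For the references in question this is standard, because each contextual round is already allowed to present its own candidate set to the learner. Should one start from a bound phrased against $\max_{\action \in \actionSpace} \conditionalMean(\ctxt_\step, \action)$, one would need the one-line observation that restricting to a subset can only shrink the comparator, and hence the regret; I would add such a sentence for self-containment.
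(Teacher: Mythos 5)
Your proposal is correct and matches the paper's intended argument: the paper treats \Cref{prop:ranker_regret} as an immediate application of the adaptive-adversary robustness of the assumed single-stage guarantee, with the nominators playing the role of the adversary choosing the per-round candidate set $\candidatePool{\step}$ and the comparator becoming $\Tilde{\action}_\step$ by definition. Your additional bookkeeping (measurability of $\candidatePool{\step}$ with respect to the ranker's filtration, and checking that the cited bounds are stated against the best arm in the instantaneously available set) is exactly the right care to take but does not change the route.
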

While proving \Cref{prop:ranker_regret} is straightforward, its consequences are not.
First, if $\bound_{\nSteps}^\ranker$ is in some sense optimal, then \Cref{eq:2s_regret_decomposition} implies the two-stage regret $\regret_{\nSteps}^{\twostage}$ will be dominated by the nominator regret $\regret_{\nSteps}^{\nomin}$ (unless it satisfies a similar guarantee).
Second, $\regret_\nSteps^\ranker \leq \bound_{\nSteps}^\ranker$ holds exactly when the ranker is trained in the `single-stage mode', i.e., the tuples $(\ctxt_\step, \action_\step, \rwrd_\step)$ are fed to the algorithm without any adjustment for the fact $\candidatePool{\step}$ is selected by a set of adaptive nominators from the whole item pool $\actionSpace$.

The above however does not mean that the ranker has no substantial effect on the overall behavior of the two-stage system.
In particular, the feedback observed by the ranker also becomes the feedback observed by the nominators, which has the \emph{primary effect} of influencing the nominator regret $\regret_{\nSteps}^{\nomin}$, and the \emph{secondary effect} of influencing the candidate pools $\candidatePool{\step}$ (which creates a feedback loop).
The rest of this section focuses on the primary effect, and in particular its dependence on how the nominators are trained and the item pools $\{ \actionPool{\nomAbbrev} \}_\nomAbbrev$ allocated.

\subsubsection*{Pitfalls in designing the nominator training objective}

The \emph{primary effect} above stems from a key property of two-stage systems: unlike the ranker, nominators do not observe feedback for all items they choose.
While importance weighting can be used to adjust the nominator training objective \citep{ma2020off}, it does not tell us what adjustment would be optimal.

We thus study two major types of updating strategies: (i)~`training-on-all,' and (ii)~`training-on-own.'
Both can be characterized in terms of the following weighted oracle objective for the $\nomAbbrev$\textsuperscript{th} nominator
\begin{align}\label{eq:toa_and_too}
    f_{\nomAbbrev, \step}
    \in
    \argmin_{f_\nomAbbrev \in \regressClass_{\nomAbbrev}}
    \biggl\{
        \| f_\nomAbbrev \|_{\regressClass_\nomAbbrev}
        +
        \sum_{i=1}^{\step-1}
            w_{\nomAbbrev, i}
            (\rwrd_i - f_\nomAbbrev(\nomCtxt{\nomAbbrev}{i}{} \, , \action_i))^2
    \biggr\}
    \, ,
\end{align}
where $\regressClass_\nomAbbrev$ is the class of functions the nominator can fit, $\| \cdot \|_{\regressClass_\nomAbbrev}$ the regularizer, and $w_{\nomAbbrev, \step} = w_{\nomAbbrev, \action_\step} \geq 0$ the weight.
\textbf{`Training-on-all'}---used in \Cref{sect:emp_observations}---takes $w_{\nomAbbrev, \action} = 1$ for all $(\nomAbbrev, \action)$, which means \emph{all} data points are valued equally regardless of whether a particular $\action_\step$ belongs to the nominator's pool $\actionPool{\nomAbbrev}$.
`Training-on-all' may potentially waste the already limited modelling capacity of the nominators if the pools $\actionPool{\nomAbbrev}$ are not identical.
The \textbf{`training-on-own'} alternative therefore uses $w_{\nomAbbrev, \action} = \ind\{\action \in \actionPool{\nomAbbrev}\}$ so that only the data points for which $\action_\step \in \actionPool{\nomAbbrev}$ influence the objective.\footnote{There are two possible definitions of `training-on-own': 
(i)~$w_{\nomAbbrev, \step} = \ind \{ \action_\step \in \actionPool{\nomAbbrev} \}$;
(ii)~$w_{\nomAbbrev, \step} = \ind \{ \action_\step = \nomAction{\nomAbbrev}{\step} \}$.
While the main text considers the former, \Cref{prop:train_all_own_fail} can be extended to the latter with minor modifications.}

While `training-on-all' and `training-on-own' are not the only options we could consider, they are representative of two very common strategies.
In particular, `training-on-all' is the default easy-to-implement option which sometimes performs surprisingly well \citep{rowland2020adaptive,bietti2018contextual}.
In contrast, `training-on-own' approximates the (on-policy) `single-stage mode' where the nominator observes feedback only for the items it selects (in particular, $\action_\step \in \actionPool{\nomAbbrev}$ only if $\action_\step = \nomAction{\nomAbbrev}{\step}$ when the pools are non-overlapping).

\Cref{prop:train_all_own_fail} below shows that neither `training-on-all' nor `training-on-all' is guaranteed to perform better than random guessing in the infinite data limit ($\nSteps \to \infty$).
We consider the \emph{linear} setting $\conditionalMean(\ctxt_\step, \action) = \langle \theta^\star, \ctxt_{\step \action} \rangle$ for all $(\ctxt_\step, \action)$, $\theta^\star$ fixed, with nominators using ridge regression oracles $\smash{f_{\nomAbbrev, \step} (\nomCtxt{\nomAbbrev}{\step}{}, \action) = \langle \hat{\theta}_{\nomAbbrev, \step} , \nomCtxt{\nomAbbrev}{\step}{\action} \rangle}$ as defined in \Cref{eq:ridge_regression}, $\lambda \geq 0$ fixed,
and $\nomCtxt{\nomAbbrev}{\step}{\action}$ again a \emph{subset} of the full feature vector $\ctxt_{\step\action}$.
We also assume the nominators take the predicted best action with non-vanishing probability (\Cref{asm:asympt_greedy}), which holds for all the cited LSO based algorithms.
\begin{assumption}
\label{asm:asympt_greedy}
    Let $f_{\nomAbbrev, \step}$ be as in \Cref{eq:toa_and_too}, and denote $\actionPool{\nomAbbrev, \step}^{\mathtt{G}} \coloneqq \argmax_{\action \in \actionPool{\nomAbbrev}} f_{\nomAbbrev, \step} (\nomCtxt{\nomAbbrev}{\step}{}, \action)$.
    We assume there is a universal constant $\delta > 0$ such that for all $\nomAbbrev \in [\nNomin]$ and $\action \in \actionPool{\nomAbbrev}$ with $\limsup \nSteps^{-1} \sum_{1\leq \step \leq \nSteps} \prob(\action_{\step}^\star = \action) > 0$,
    we have $\limsup \nSteps^{-1} \sum_{1 \leq 1 \leq \nSteps} \prob(\nomAction{\nomAbbrev}{\step} \in \actionPool{\nomAbbrev, \step}^{\mathtt{G}} \given \action_{\step}^\star = \action) \geq \delta$.
\end{assumption}

\begin{proposition}
\label{prop:train_all_own_fail}
    In both the supervised and the bandit learning setup, there exist two \emph{distinct} context distributions with pool allocations $\{ \actionPool{\nomAbbrev} \}_{\nomAbbrev}$, 
    and $\rwrd_\action \in [0, 1]$ almost surely (a.s.) for all $\action \in \actionSpace$,
    such that `training-on-own' (resp.\ `training-on-all') leads to asymptotically linear two-stage regret
    \begin{align*}
        \limsup_{\nSteps \to \infty}
            \frac{\E[\regret_\nSteps^{\twostage}]}{\nSteps}
        > 
        0
        \, .
    \end{align*}
    Moreover, the asymptotic regret of `training-on-all' is sublinear under the context distribution and pool allocation where `training-on-own' suffers linear regret, and vice versa.
\end{proposition}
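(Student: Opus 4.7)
The proof will proceed by constructing two explicit linear-bandit instances (context distributions together with pool allocations $\{\actionPool{\nomAbbrev}\}_\nomAbbrev$), each of which defeats exactly one of the two training schemes while the other succeeds. I will first handle the supervised setting and then show that the bandit setting reduces to it via \Cref{asm:asympt_greedy}.

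First, I would establish a \emph{reduction lemma}: under either scheme, the ridge estimator defined in \Cref{eq:toa_and_too} is, up to the asymptotically negligible $\lambda/\step$ term, the ordinary least-squares estimator under the reweighted empirical distribution. By the strong law of large numbers applied to the normal equations, it converges to the population OLS minimizer under the marginal distribution of $(\action_\step, \rwrd_\step)$ (training-on-all) or the same distribution conditioned on $\action_\step \in \actionPool{\nomAbbrev}$ (training-on-own). Combined with \Cref{asm:asympt_greedy}, this implies that each nominator asymptotically picks the greedy argmax of $\langle \hat{\theta}_{\nomAbbrev}^\infty, \nomCtxt{\nomAbbrev}{}{\cdot} \rangle$ within $\actionPool{\nomAbbrev}$ with frequency at least $\delta$, so whenever this argmax is a strictly suboptimal arm, the nominator contributes a positive $\delta \Delta$ to the per-step expected regret, making $\regret_\nSteps^\twostage$ grow linearly via \Cref{eq:2s_regret_decomposition}. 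Conversely, when the asymptotic argmax coincides with the pool-optimal arm, the nominator regret is $\littleO(\nSteps)$, and the ranker regret is bounded by \Cref{prop:ranker_regret}, yielding sublinear $\regret_\nSteps^\twostage$.

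The first instance (where training-on-all fails) would be built with $\nDims = 2$, $s = 1$, and the nominator seeing only $\ctxt_1$. Pool $\actionPool{1}$ contains two items whose first coordinate is \emph{negatively} correlated with reward, so that the pool-optimal arm has the smaller visible feature. Pool $\actionPool{2}$ contains items with large positive visible features and proportionally large rewards (creating a strong positive sample correlation). In the limit, training-on-own recovers the correct negative within-pool slope and selects the optimal arm in $\actionPool{1}$, while training-on-all's fit is dominated by $\actionPool{2}$'s data, yielding a positive slope that inverts the ranking within $\actionPool{1}$. The second instance (where training-on-own fails) is the more delicate construction and takes $\nDims = 3$, $s = 2$: the nominator sees $(\ctxt_1, \ctxt_2)$ and is blind to $\ctxt_3$. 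Pool $\actionPool{1}$ is engineered so that, within the pool, the hidden coordinate $\ctxt_3$ is strongly anti-correlated with the visible features in a way that flips the sign of the within-pool OLS slopes, producing an incorrect in-pool ranking. Pool $\actionPool{2}$ is populated with items whose hidden coordinate is zero and whose visible features carry the true correlation pattern, and is weighted so heavily in the context distribution that the aggregate normal equations produce $\hat\theta \approx \theta^\star_{1:2}$, which ranks $\actionPool{1}$ correctly. One just verifies by direct calculation of the two normal-equation solutions that the greedy argmax in $\actionPool{1}$ differs between the schemes and in the expected direction; bounded rewards are obtained by rescaling all features and the target by a suitable constant $c > 0$.

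For the bandit setting, the data distribution is policy-dependent, so the population OLS argument in Step~1 needs one extra ingredient: that each arm in $\actionPool{\nomAbbrev}$ appears with frequency bounded away from zero along the trajectory. This is supplied by \Cref{asm:asympt_greedy} applied to whichever nominator is in question (combined with the fact that the ranker, by \Cref{prop:ranker_regret}, does not systematically suppress that nominator's candidates). Standard matrix-Bernstein or self-normalized concentration then gives $\hat{\theta}_{\nomAbbrev, \step} \to \hat{\theta}_{\nomAbbrev}^\infty$, and the argument from the supervised case carries through verbatim. The main obstacle is the second construction: the numerical balancing of $\actionPool{1}$'s misleading within-pool signal against $\actionPool{2}$'s dominating mass must be done carefully enough that the sign of $\hat\theta$ flips between the two schemes while the greedy argmax within $\actionPool{1}$ lands on the pool optimum only under training-on-all. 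Once the closed-form normal-equation solutions are written down, this becomes a finite numerical check.
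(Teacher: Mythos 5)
Your supervised-setting plan follows the same route as the paper: pick a small explicit linear instance, compute the two population least-squares limits (marginal vs.\ pool-conditional), and choose rewards so that the within-pool $\argmax$ flips between the two schemes. The paper's concrete instances (a $3\times 3$ and a $4\times 4$ context matrix with the nominator restricted to the last two columns, and hand-picked reward vectors) are just fully worked-out versions of the constructions you sketch, so that part of your proposal is essentially on track, modulo actually exhibiting the numbers.

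There are, however, two genuine gaps. First, your bandit argument asserts that $\hat{\theta}_{\nomAbbrev,\step}$ converges to the population OLS limit because ``each arm appears with frequency bounded away from zero,'' citing \Cref{asm:asympt_greedy} and \Cref{prop:ranker_regret}. Neither gives you this: \Cref{asm:asympt_greedy} only lower-bounds the probability that the \emph{nominator} proposes its greedy candidate, and \Cref{prop:ranker_regret} bounds the ranker's regret relative to the best candidate---it says nothing about which arms the ranker actually plays or whether their empirical frequencies converge. Under an adaptive policy the empirical second-moment matrix need not converge at all, so ``standard matrix-Bernstein'' does not yield a limit for $\hat{\theta}_{\nomAbbrev,\step}$. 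The paper's missing ingredient is a proof by contradiction: \emph{assume} $\limsup \nSteps^{-1}\E[\regret_\nSteps^{\twostage}] \to 0$; this forces $\action_\step = \action_\step^\star$ on all but a vanishing fraction of rounds, and the contexts are designed (each $X_{(j)}$ zeroes out every row but the $j$\textsuperscript{th}) so that only those rounds contribute to the normal equations, whence $\hat{\theta}_{2,\step} \to \theta_2^\star$ in probability (\Cref{lem:as_inf_count}); the limit then produces a systematic within-pool error, contradicting sublinearity. Without this dichotomy your bandit step does not close. Second, your sublinear (``vice versa'') direction relies on the non-failing scheme's nominator asymptotically playing its correct greedy $\argmax$, but \Cref{asm:asympt_greedy} only guarantees this with probability $\geq \delta$, not probability tending to one, so the optimal arm may be absent from $\candidatePool{\step}$ a constant fraction of the time. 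The paper circumvents this by adding a third nominator whose pool is the singleton containing the problematic item, which forces $\prob(\action_\step^\star \in \candidatePool{\step}) \to 1$ regardless of how any nominator explores; you need either this device or a strictly stronger assumption on the nominators' asymptotic greediness.
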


\begin{proof}
    Throughout, we use
    $\hat{\theta}_{\nomAbbrev, \step} 
    \to 
    (\E [\nomCtxt{\nomAbbrev}{}{\action} \nomCtxt{\nomAbbrev}{}{\action}^\top])^{-1} \E [ \nomCtxt{\nomAbbrev}{}{\action} \rwrd_\action ] \eqqcolon \theta_\nomAbbrev^\star$
    (a.s.) by \Cref{lem:ridge_consistency} (\Cref{app:proofs}), assuming invertibility and that $\action_\step$ is i.i.d.;
    note $\theta_\nomAbbrev^\star = \argmin_{\theta_\nomAbbrev} \E[(\rwrd_\action - \langle \theta_\nomAbbrev, \nomCtxt{\nomAbbrev}{}{\action} \rangle)^2]$.
    We allow any zero mean reward noise which satisfies $\rwrd_\action \in [0, 1]$ (a.s.) for all $\action \in \actionSpace$.

    \textbf{(I)~Supervised setup.}
    Take two nominators,
    $\actionPool{1} = \{ \action_{(1)}\}$, $\actionPool{2} = \{ \action_{(2)}, \action_{(3)} \}$,
    a single context 
    \begin{align}\label{eq:toa_feats}
        X
        \coloneqq
        \begin{bmatrix}
            \horzbar & \ctxt_{\action_{(1)}} & \horzbar \\
            \horzbar & \ctxt_{\action_{(2)}} & \horzbar \\
            \horzbar & \ctxt_{\action_{(3)}} & \horzbar
        \end{bmatrix}
        =
        \begin{bmatrix}
            1 & 0 & \!\!\!\!-1 \\
            0 & 1 & 0 \\
            0 & 0 & 1
        \end{bmatrix}
        \, ,
    \end{align}
    and \emph{restrict the nominators to the last two columns of $X$}.
    As $|\actionPool{1}| = 1$, the first nominator always proposes $\action_{(1)}$, disregards of its fitted model.
    Since all rewards are revealed and used to update the model in the supervised setting, `training-on-all' $\step \to \infty$ limit for the second nominator's $\smash{\hat{\theta}_{2, \step}}$ is
    \begin{align*}
        \theta_2^\star
        &=
        \argmin_{\beta \in \R^2}
            \E_{\action \sim \unif(\actionSpace)} [(
                \rwrd_\action - 
                \langle 
                \beta , \nomCtxt{2}{}{\action}
                \rangle
            )^2]
        \\
        &=
        \argmin_{\beta \in \R^{2}} \bigl\{
            (\meanRwrd_{1} + \beta_2)^2
            +
            (\meanRwrd_{2} - \beta_1)^2
            +
            (\meanRwrd_{3} - \beta_2)^2
        \bigr\}
        =
        [\meanRwrd_2, \tfrac{\meanRwrd_3 - \meanRwrd_1}{2}]^\top
        \, ,
    \end{align*}
    where $\meanRwrd$ is the mean reward vector for the single context ($\theta^\star$ is then $X^{-1} \meanRwrd$).
    If we take, e.g., $\meanRwrd = [\frac{1}{4}, \frac{1}{2}, 1]^\top$, then
    $\smash{\action_{(3)} \neq \argmax_{\action \in \actionPool{2}} \langle \theta_2^\star, \nomCtxt{2}{}{\action} \rangle = \action_{(2)}}$.
    On the other hand, `training-on-own' would yield $\theta_2^\star = [\meanRwrd_2, \meanRwrd_3]^\top$, and thus correctly identify $\action_{(3)}$ via $\argmax_{\action \in \actionPool{2}} \langle \theta_2^\star, \nomCtxt{2}{}{\action} \rangle$.
    
    In contrast, consider the modified setup
    $\actionPool{1} = \{ \action_{(1)} \}$, $\actionPool{2} = \{ \action_{(2)}, \action_{(3)}, \action_{(4)} \}$
    \begin{align}\label{eq:too_feats}
        X 
        \coloneqq
        \begin{bmatrix}
            \horzbar & \ctxt_{\action_{(1)}} & \horzbar \\
            \horzbar & \ctxt_{\action_{(2)}} & \horzbar \\
            \horzbar & \ctxt_{\action_{(3)}} & \horzbar \\
            \horzbar & \ctxt_{\action_{(4)}} & \horzbar
        \end{bmatrix}
        =
        \begin{bmatrix}
            1 & 0 & 0 & \!\!\!\!-1 \\
            0 & 1 & 0 & \!\!\!\!-1 \\
            0 & 0 & 1 & 0 \\
            0 & 0 & 0 & 1 
        \end{bmatrix}
        \, .
    \end{align}
    Restricting nominators to the last two columns of $X$, `training-on-own' would yield $\theta_2^\star = [\meanRwrd_3 , \tfrac{\meanRwrd_4 - \meanRwrd_2}{2} ]^\top$ under full feedback access,
    whereas `training-on-all' would converge to $
    \theta_2^\star = [\meanRwrd_3 , \frac{\meanRwrd_4 - \meanRwrd_2 - \meanRwrd_1}{3}]^\top$.
    Hence with, e.g., $\smash{\meanRwrd = [\frac{3}{4}, 1, \frac{1}{6}, \frac{7}{8}]^\top}$, `training-on-own' would make the second nominator pick $\action_{(3)}$ via $\argmax$, but `training-on-all' would successfully identify the optimal $\action_{(2)}$.
    
    \textbf{(II)~Bandit setup.}
    Take $X$ from \Cref{eq:too_feats}, but use $\meanRwrd = [\frac{3}{4}, \frac{7}{8}, \frac{1}{6}, 1]^\top$ and the associated $\theta^\star = X^{-1} \meanRwrd$.
    For each $j \in [4]$, let $\smash{X_{(j)}}$ be a deterministic context matrix which is the same as $X$ except all but the $j$\textsuperscript{th} row are replaced by zeros.
    Observe that for each $j$, the mean reward vector $X_{(j)} \theta^\star$ has exactly one \emph{strictly} positive component, and thus $\action_\step^\star = \action_{(j)}$ when $X_{(j)}$ is drawn.
    
    Let $\unif(\{ X_{(j)} \}_j)$ be the context distribution, $\actionPool{1} = \{ \action_{(1)} \}$, $\actionPool{2} = \{ \action_{(2)} , \action_{(3)} , \action_{(4)} \}$, and restrict nominators to the last two columns of each sampled $X_{(j)}$.
    We employ a proof by contradiction.
    Assume $\limsup \nSteps^{-1} \E [\regret_\nSteps^{\twostage}] \to 0$.
    Then $\smash{\hat{\theta}_{2 , \step} \to \theta_2^\star}$ in probability by \Cref{lem:as_inf_count} (\Cref{app:proofs}), with $\theta_2^\star$ as stated right after \Cref{eq:too_feats} for both the update rules.
    Since $\theta_{2, 2}^\star = \frac{1}{16} > 0$ under `training-on-own', resp.\ $\theta_{2, 2}^\star = - \frac{5}{24} < 0$ under `training-on-all', 
    $\argmax_{\action \in \actionPool{2}} \langle \theta_2^\star, \nomCtxt{2}{\step}{\action} \rangle$
    would fail to select $\action_{\step}^\star$ when $X_{(2)}$, resp.\ $X_{(4)}$, is sampled (see \Cref{eq:too_feats}).
    This would translate into an expected instantaneous regret of at least $\Delta \coloneqq \min_i \meanRwrd_i = \frac{1}{6} > 0$.
    Hence by \Cref{eq:2s_regret_decomposition} and $\prob (\action_\step^\star = \action) = |\actionSpace|^{-1}$
    \begin{align}\label{eq:regret_lower_bound}
        \E [\regret_{\nSteps}^{\twostage}]
        \geq 
        \E [\regret_{\nSteps}^{\nomin}]
        \geq
        \frac{\Delta}{|\actionSpace|}
        \sum_{\action \in \actionPool{2}}
            \prob(
                \nomAction{2}{\step} \neq \action 
                \given 
                \nomAction{2}{\step} \in \actionPool{2, \step}^\mathtt{G} ,
                \action_\step^\star = \action
            )
            \prob(
                \nomAction{2}{\step} \in \actionPool{2, \step}^\mathtt{G}
                \given
                \action_\step^\star = \action
            )
        \, .
    \end{align}
    For `training-on-own', 
    $\prob( \nomAction{2}{\step} \neq \action_{(2)} \given  \nomAction{2}{\step} \in \actionPool{2, \step}^\mathtt{G} , \action_\step^\star = \action_{(2)}) = \prob(\hat{\theta}_{2, \step 2} \cdot (-1) < 0) \to 1$ by the above established $\smash{\hat{\theta}_{2, \step} \to \theta_2^\star}$ in probability, and the continuous mapping theorem.
    Analogously for `training-on-all'.
    Hence $\limsup \nSteps^{-1} \E [\regret_{\nSteps}^{\twostage}] \geq |\actionSpace|^{-1} \Delta \delta > 0$ by \Cref{asm:asympt_greedy}, a contradiction, meaning both modes of training fail, but for a different item ($a_{(3)}$ is picked out correctly by both again by the convergence in probability).
    To make $\limsup \nSteps^{-1} \E [\regret_{\nSteps}^{\nomin}] \to 0$ for exactly one of the two setups, add a \emph{third nominator} with $\actionPool{3} = \{ \action_{(2)} \}$, resp.\ $\actionPool{3} = \{ \action_{(4)} \}$, so that $\prob(\action_\step^\star \in \candidatePool{\step}) \to 1$.
\end{proof}

\Cref{prop:train_all_own_fail} shows that the nominator training objective can be all the difference between poor and optimal two-stage recommender.\footnote{\citet{covington2016deep} reported empirically observing that the the training objective choice has an outsized influence on the performance of two-stage recommender systems.
\Cref{prop:train_all_own_fail} can be seen as a theoretical complement which shows that the range of important choices goes beyond the selection of the objective.}
Moreover, neither `training-on-own' nor `training-on-all' guarantees sublinear regret, and one can fail exactly when the other works.
The main culprit is the difference between context distribution in and outside of each pool:
combined with the misspecification, either one can result in more favorable optima from the overall two-stage performance perspective.
This is the case \emph{both in the supervised and the bandit setting}.

\Cref{prop:train_all_own_fail} can be trivially extended to higher number of arms and nominators (add embedding dimensions, let the new arms have non-zero embedding entries only in the new dimensions, and the expected rewards to be lower than the ones we used above).
We think that the difference between the in- and out-pool distributions could be exploited to derive analogous results to \Cref{prop:train_all_own_fail} for non-linear (e.g., factorization based) models, although the proof complexity may increase.

To summarize, beyond the actual number of nominators identified in the previous section, we have found that the combination of training objective and pool allocation can heavily influence the overall performance.
We use these insights to improve two-stage systems in the next section.

\section{Learning pool allocations with Mixture-of-Experts}\label{sect:moe}

Revisiting the proof of \Cref{prop:train_all_own_fail}, we see the employed pool allocations are essentially adversarial with respect to the context distributions.
However, we are typically free to design the pools ourselves, with the only constraints imposed by computational and statistical performance requirements. 
\Cref{prop:train_all_own_fail} thus hints at a positive result: a good pool allocation can help us achieve an (asymptotically) optimal performance \emph{even in cases where this  is not possible using any one of the nominators alone.}

Crafting a good pool allocation \emph{manually} may be difficult, and could lead to very bad performance if not done carefully (\Cref{prop:train_all_own_fail}).
We thus propose to \emph{learn} the pool allocation using a \textbf{Mixtures-of-Experts} (MoE) \citep{jacobs1991adaptive,jordan1992hierarchies,jordan1993hierarchical,yuksel2012twenty} based approach instead.
A MoE computes predictions by weighting the individual expert (nominator) outputs using a trainable gating mechanism.
The weights can be thought of as a \emph{soft pool allocation} which allows each expert to specialize on a different subset of the input space.
This makes the MoE more flexible than any one of the experts alone, alleviating the lower modeling flexibility of the nominators due to the latency constraints.

We focus on the Gaussian MoE \citep{jacobs1991adaptive}, trained by likelihood maximization (\Cref{eq:moe_objective}).
We employ gradient ascent which---despite its occasional failure to find a good local optimum \citep{makkuva2019breaking}---is easy to scale to large datasets using a stochastic approximation of gradients
\begin{align}\label{eq:moe_objective}
    \frac{1}{\nSteps}
    \sum_{\step=1}^\nSteps
        \log \sum_{\nomAbbrev = 1}^{\nNomin}
            p_{\nomAbbrev, \step}
            \exp \biggl\lbrace
                -
                \frac{
                    (
                        \rwrd_\step
                        -
                        \hat{\rwrd}_{\nomAbbrev, \step}
                    )^2
                }{2 \sigma^2}
            \biggr\rbrace
    \approx
    \frac{1}{B}
    \sum_{\step=1}^B
        \log \sum_{\nomAbbrev = 1}^N
            p_{\nomAbbrev, \step}
            \exp \biggl\lbrace
                -
                \frac{
                    (
                        \rwrd_\step
                        -
                        \hat{\rwrd}_{\nomAbbrev, \step}
                    )^2
                }{2 \sigma^2}
            \biggr\rbrace
    \, ,
\end{align}
with $p_{\nomAbbrev, \step} \geq 0$ ($\sum_{\nomAbbrev} p_{\nomAbbrev, \step} = 1$) the gating weight assigned to expert $\nomAbbrev$ on example $t$, $\hat{\rwrd}_{\nomAbbrev, \step}$ the matching expert prediction, $\sigma > 0$ a hyperparameter approximating reward variance, and $B \in \N$ the batch size.

MoE provides a compelling alternative to a policy gradient style approach applied to the joint two-stage policy $\policy^{\twostage}(\action \given \ctxt) = \sum_{\action_1, \ldots , \action_\nNomin} \policy^{\ranker} (\action \given \ctxt, \candidatePool{}) \prod_{n=1}^\nNomin \policy_n^{\nomin}(\action_n \given \ctxt)$ as done in \citep{ma2020off}.
In particular, a significant advantage of the MoE approach is that the sum over the \emph{exponentially many} candidate sets $\candidatePool{} = \{ \action_1, \ldots, \action_\nNomin \}$ is replaced by a sum over \emph{only $\nNomin$} experts, which can be either computed exactly or estimated with dramatically smaller variance than in the candidate set case.

There are (at least) two ways of incorporating MoE into existing two-stage recommender deployments:
\begin{enumerate}[topsep=0pt]
    \item Use a \emph{provisional} gating mechanism, and then distill the pool allocations from the learned weights, e.g., by thresholding the per arm average weight assigned to each nominator, or by restricting the gating network only to the item features.
    Once pools are divided, nominators and the ranker may be finetuned and deployed \emph{using any existing infrastructure}.
    
    \item Make the gating mechanism \emph{permanent}, either as (i)~a replacement for the ranker, or (ii)~part of the nominator stage, reweighting the predictions before the candidate pool is generated.
    This necessitates change of the existing infrastructure but can yield better recommendations.
\end{enumerate}
Unusually for MoEs, we may want to use a different input subset for the gating mechanism and each expert depending on which of the above options is selected.
We would like to emphasize that the MoE approach can be used with any \emph{score-based} nominator architecture including but not limited to the linear models of the previous section.
If some of the nominators are \emph{not} trainable by gradient descent but \emph{are} score-based, they can be pretrained and then plugged in during the MoE optimization, allowing the other experts to specialize on different items.

\begin{figure}[tbp]
  \centering
  \begin{tikzpicture}
    \node[anchor=south west,inner sep=0] (image) at (0,0) {\includegraphics[width=0.48\columnwidth]{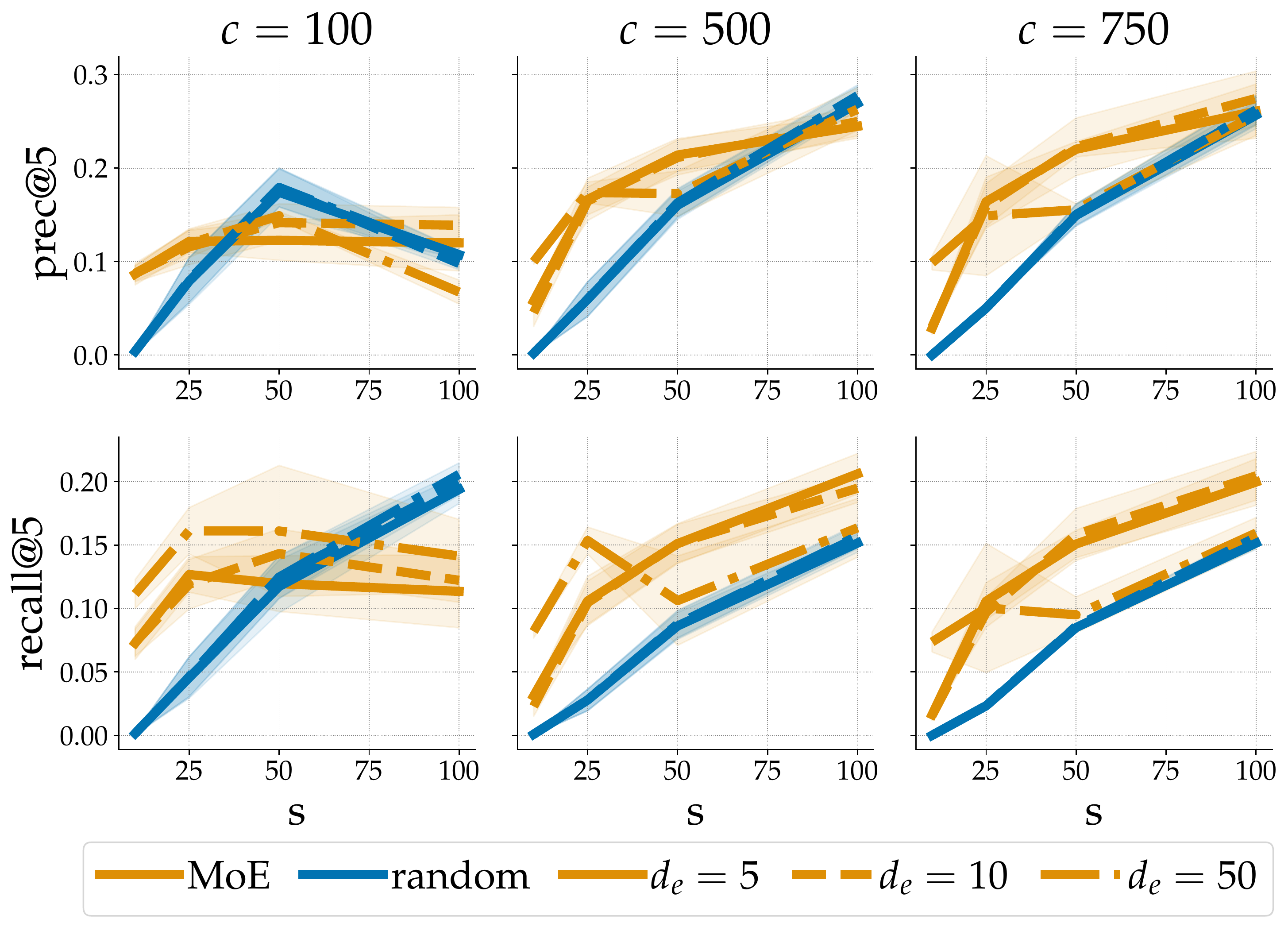}};
    \begin{scope}[x={(image.south east)},y={(image.north west)}]
    \node at (0.025,0.055) {\textbf{(a)}};
    \node at (0.5, 1.05) {{\footnotesize \color{gray} 10 nominators}};
    \end{scope}
  \end{tikzpicture}
  \hfill
  \begin{tikzpicture}
    \node[anchor=south west,inner sep=0] (image) at (0,0) {\includegraphics[width=0.48\columnwidth]{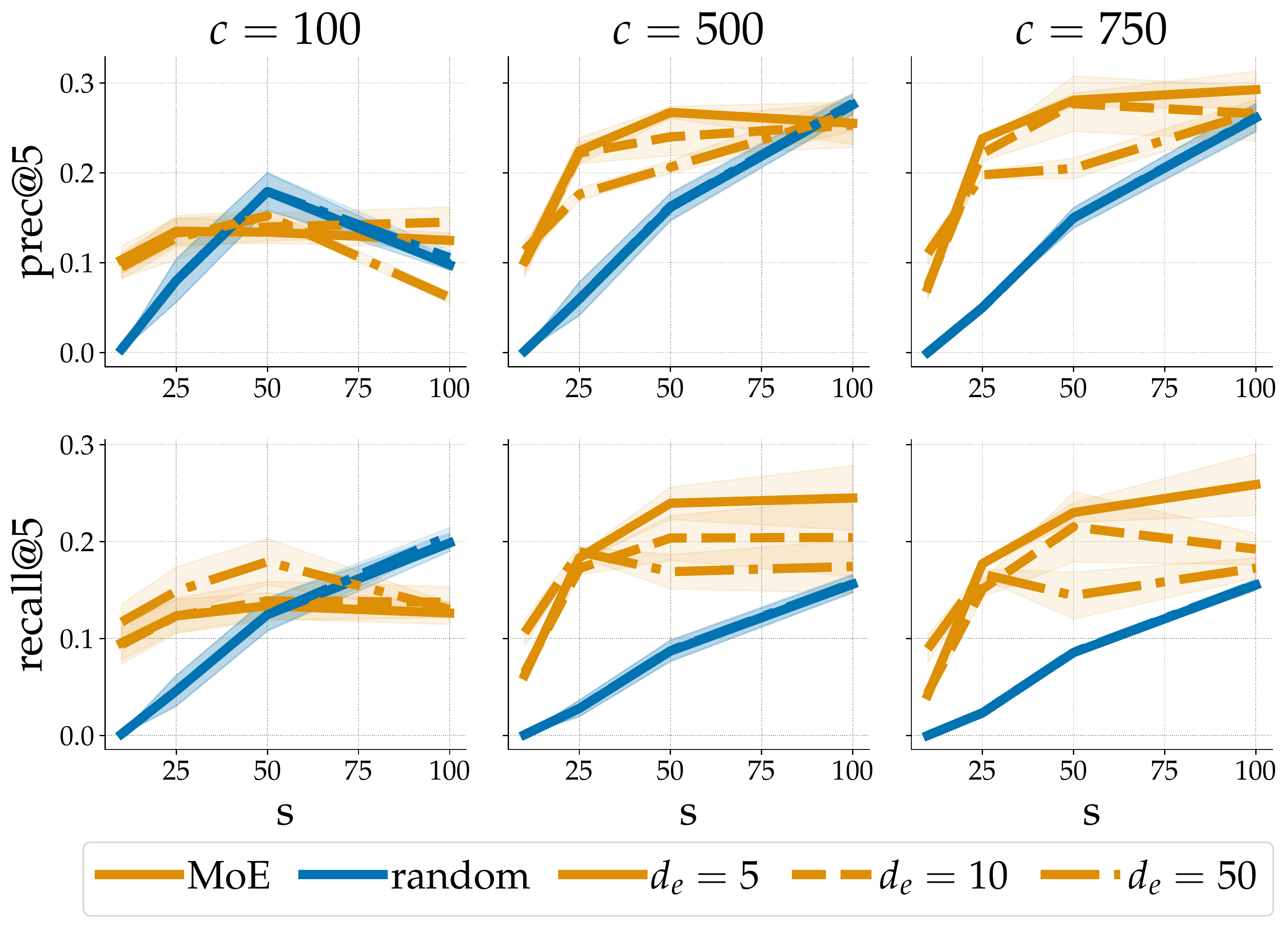}};
    \begin{scope}[x={(image.south east)},y={(image.north west)}]
    \node at (0.025,0.055) {\textbf{(b)}};
    \node at (0.5, 1.05) {{\footnotesize \color{gray} 20 nominators}};
    \end{scope}
  \end{tikzpicture}
  \caption{Mixture-of-Experts results on the 100-item Amazon dataset.
  The x-axis is the size of the BERT embedding dimension subset.
  The y-axis shows the average \texttt{precision@5} (top row) and \texttt{recall@5} (bottom row) over 50,000 entries from an independent test set (both set to zero for entries with no positive labels---about 5.5\% of the test set).
  The columns in both plots \textbf{(a)} and \textbf{(b)} correspond to the number of examples per arm $c$ in the training set.
  Ten (resp.\ twenty) nominators were used in \textbf{(a)} (resp.\ \textbf{(b)}).
  The legend shows whether pool allocations were learned (MoE) or randomly assigned (random), and the dimension of item embeddings $d_e$ employed by both model types.}
  \label{fig:moe_results}
\end{figure}

We use the `AmazonCat-13K' dataset \citep{mcauley2013hidden,bhatia16} to investigate the setup with a logistic gating mechanism as a part of the nominator stage.
We employ the same preprocessing as in \Cref{sect:emp_observations}. 
Due to the success of greedy methods in \Cref{sect:emp_observations}, and the existence of black-box reductions from bandit to offline learning \citep{foster2018practical,simchi2020bypassing}, we simplify by focusing only on offline evaluation.
We compare the MoE against the same model except with the gating replaced by a random pool allocation fixed at the start.

The experts in both models use a simple two-tower architecture, where $d_e$-dimensional dense embeddings are learned for each item, the $s$-dimensional subset of the BERT embeddings is mapped to $\R^{d_e}$ by another trained matrix, and the final prediction is computed as the dot product on $\R^{d_e}$.
To enable low latency computation of recommendations, the gating mechanism models the logits $\{\log p_\nomAbbrev \}_\nomAbbrev$ as a sum of learned user and item embeddings.
Further details are described in \Cref{app:implementation}.

\Cref{fig:moe_results} shows that MoEs are able to outperform random pool allocation for most combinations of model architecture and training set size.
The improved results in recall suggest that the specialization allows nominators to produce a more diverse candidate set.
Since \emph{the gating mechanism can learn to exactly recover any fixed pool allocation}, the \emph{MoE can perform worse only when the optimizer fails or the model overfits}.
This seems to be happening for the smallest training set size ($c = 100$ samples per arm), and also when the item embedding dimension $d_e$ is high.
In practice, these effects can be counteracted by tuning hyperparameters for the specific setting, regularization, or alternative training approaches based on expectation--maximization or tensor decomposition \citep{jordan1993hierarchical,yuksel2012twenty,makkuva2019breaking}.

\section{Other related work}

\textbf{Scalable recommender systems.}
Interest in scalable recommenders has been driven by the continual growth of available datasets \citep{sarwar2002recommender,takacs2009scalable,gopalan2015scalable,li2019multi}. 
The two-stage architectures examined in this paper have seen widespread adoption in recommendation \citep{covington2016deep,borisyuk2016casmos, eksombatchai2018pixie,zhu2018learning,zhao2019recommending}, and beyond \citep{asadi2012fast,zhang2014coarse}.
Our paper is specifically focused on recommender systems which means our insights may not transfer to application areas like information retrieval without adaptation.

\textbf{Off-policy learning and evaluation.} 
Updating the recommendation policy online, without human oversight, runs the risk of compromising the service quality, and introducing unwanted behavior.
Offline learning from logged data \citep{dudik2011doubly,swaminathan2015batch,munos2016safe,thomas2016data} is an increasingly popular alternative \citep{joachims2017,swaminathan2017offpolicy,chen2019top,ma2020off}.
It has also found applications in search engines, advertising, robotics, and more \citep{strehl2010learning,joachims2017,agarval2019estimating,levine2020offline}.

\textbf{Ensembling and expert advice.}
The goal of `learning with expert advice' \citep{littlestone1989weighted,auer2002nonstochastic,singla2018learning,agarwal2017corralling} is to achieve performance comparable with the best expert if deployed on its own.
This is not a good alternative to our MoE approach since two-stage systems typically outperform any one of the nominators alone (\Cref{sect:comparison}).
A better alternative may possibly be found in the literature on `aggregation of weak learners' \citep{ho1995random,breiman1996bagging,breiman1998arcing,friedman2001greedy,friedman2002stochastic},
or recommender ensembling (see \citep{ccano2017hybrid} for a recent survey).

\section{Discussion}\label{sect:discussion}

We used a combination of empirical and theoretical tools to investigate the differences between single- and two-stage recommenders.
Our \textbf{first major contribution} is demonstrating that besides common factors like item pool size and model misspecification, the nominator count and training objective can have even larger impact on performance in the two-stage setup.
As a consequence, two-stage systems \emph{cannot} be fully understood by studying their components in isolation, and we have shown that the common practice of training each component independently may lead to suboptimal results.
The importance of the nominator training inspired our \textbf{second major contribution}: identification of a link between two-stage recommenders and Mixture-of-Experts models.
Allowing each nominator to specialize on a different subset of the item pool, we were able to significantly improve the two-stage performance.
Consequently, splitting items into pools within the nominator stage is not just a way of lowering latency, but can also be used to improve recommendation quality.

Due to the the lack of access, a major limitation of our work is not evaluating on a production system.
This may be problematic due to the notorious difficulty of offline evaluation \citep{krauth2020offline,rossetti2016contrasting,beel2015comparison}.
We further assumed that recommendation performance is captured by a few measurements like regret or precision/recall at $K$, even though design of meaningful evaluation criteria remains a challenge \citep{dean2020recommendations,milli2021optimizing,herlocker2004evaluating,kaminskas2016diversity};
we caution against deployment without careful analysis of downstream effects and broader impact assessment.
Several topics were left to future work: (i)~extension of the linear regret proof to non-linear models such as those used in the MoE experiments; (ii)~slate (multi-item) recommendation; (iii)~theoretical understanding of how much can the ranker reduce the regret compared to the best of the (misspecified) nominators; (iv)~alternative ways of integrating MoEs, including explicit distillation of pool allocations from the learned gating weights, learning the optimal number of nominators \citep{rasmussen2001infinite}, using categorical likelihood \citep{yuksel2012twenty}, and sparse gating \citep{shazeer2017outrageously,fedus2021switch}.

Overall, we believe better understanding how two-stage recommenders work matters due to the enormous reach of the platforms which employ them.
We hope our work inspires further inquiry into two-stage systems in particular, and the increasingly more common `algorithm-algorithm' interactions between independently trained and deployed learning algorithms more broadly.

\begin{ack}
The authors thank Matej Balog, Mateo Rojas-Carulla, and Richard Turner for their useful feedback on early versions of this manuscript.
Jiri Hron is supported by an EPSRC and Nokia PhD fellowship. 
\end{ack}

{
\small
\bibliographystyle{plainnat}
\bibliography{ref}
}

\vfill
\pagebreak

\appendix

\section{Auxiliary lemmas}\label{app:proofs}

Throughout the paper, we assume the \textbf{`stack of rewards model'} from chapter~4.6 of \citep{lattimore2020bandit}.

\begin{lemma}\label{lem:ridge_consistency}
  Let $\hat{\theta}_\step = \Sigma_\step \sum_{i=1}^{\step - 1} \ctxt_i \rwrd_i$ where $\Sigma_{\step} = (\lambda I + \sum_{i=1}^{\step - 1} \ctxt_i \ctxt_i^\top)^{-1}$ for some fixed $\lambda \geq 0$.
  Assume $(\ctxt_i, \rwrd_i)$ are i.i.d.\ with $\E [\ctxt_1 \rwrd_1]$ and $\E [\ctxt_1 \ctxt_1^\top]$ well-defined, and the latter invertible.
  Then
  \begin{align}
    \hat{\theta}_\step 
    \to
    (\E [\ctxt_1 \ctxt_1^\top])^{-1} \E [\ctxt_1 \rwrd_1]
    \, 
    \quad
    \text{a.s.}
  \end{align}
\end{lemma}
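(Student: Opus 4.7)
The plan is to massage $\hat{\theta}_\step$ into a ratio of sample averages and then invoke the strong law of large numbers (SLLN) together with the continuous mapping theorem (CMT). Concretely, I would rewrite
\begin{align*}
    \hat{\theta}_\step
    =
    \Bigl(\tfrac{\lambda}{\step - 1} I + \tfrac{1}{\step - 1}\sum_{i=1}^{\step - 1} \ctxt_i \ctxt_i^\top\Bigr)^{-1}
    \Bigl(\tfrac{1}{\step - 1}\sum_{i=1}^{\step - 1} \ctxt_i \rwrd_i\Bigr) \, ,
\end{align*}
which is valid whenever the matrix inside the first bracket is invertible; the $\lambda / (\step - 1)$ term vanishes as $\step \to \infty$.

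Next, I would apply the SLLN coordinate-wise (or entry-wise for matrices) to the two sample averages. Since $(\ctxt_i, \rwrd_i)$ are i.i.d.\ and both $\E[\ctxt_1 \ctxt_1^\top]$ and $\E[\ctxt_1 \rwrd_1]$ are assumed well-defined, there exists an event $\Omega_0$ of probability one on which simultaneously $\tfrac{1}{\step - 1}\sum_{i=1}^{\step - 1} \ctxt_i \ctxt_i^\top \to \E[\ctxt_1 \ctxt_1^\top]$ and $\tfrac{1}{\step - 1}\sum_{i=1}^{\step - 1} \ctxt_i \rwrd_i \to \E[\ctxt_1 \rwrd_1]$. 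On $\Omega_0$, the matrix $\tfrac{\lambda}{\step - 1} I + \tfrac{1}{\step - 1}\sum_i \ctxt_i \ctxt_i^\top$ converges to the invertible matrix $\E[\ctxt_1 \ctxt_1^\top]$.

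The last step is to invoke the CMT: matrix inversion is continuous on the open set of invertible matrices, so on $\Omega_0$ the inverse converges a.s.\ to $(\E[\ctxt_1 \ctxt_1^\top])^{-1}$. Combining this with the a.s.\ convergence of the second factor and the fact that matrix-vector multiplication is (jointly) continuous yields the stated limit a.s.

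The only mild subtlety---and the closest thing to an obstacle---is justifying that the pre-limit quantity $\hat{\theta}_\step$ is well-defined for large enough $\step$ on the convergence event, which is automatic when $\lambda > 0$ and otherwise follows from the eventual invertibility on $\Omega_0$ (since the limit matrix has a strictly positive smallest eigenvalue and eigenvalues depend continuously on matrix entries). After that, the argument is entirely routine bookkeeping using SLLN and CMT; no probabilistic inequalities or rates are needed because only a.s.\ convergence is claimed.
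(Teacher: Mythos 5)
Your proposal is correct and follows essentially the same route as the paper's proof: rewrite $\hat{\theta}_\step$ as normalized sums, apply the strong law of large numbers to both factors, and conclude via continuity of matrix inversion and the continuous mapping theorem. The extra remark about eventual invertibility when $\lambda = 0$ is a fine (and slightly more careful) addition, but not a substantive difference.
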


\begin{proof}
    Rewriting $\hat{\theta}_{\step+1} = (\tfrac{\lambda}{\step}I + \tfrac{1}{\step} \sum_{i=1}^{\step} \ctxt_i \ctxt_i^\top)^{-1} \tfrac{1}{\step} \sum_{i=1}^{\step} \ctxt_i \rwrd_i$, $\tfrac{\lambda}{\step}I + \tfrac{1}{\step} \sum_{i=1}^{\step} \ctxt_i \ctxt_i^\top \to \E [\ctxt_1 \ctxt_1^\top]$ and $\tfrac{1}{\step} \sum_{i=1}^{\step} \ctxt_i \rwrd_i \to \E [\ctxt_1 \rwrd_1 ]$ a.s.\ by the strong law of large numbers.
    Since $A \mapsto A^{-1}$ is continuous on the space of invertible matrices, the result follows by the continuous mapping theorem.
\end{proof}

\begin{lemma}\label{lem:as_inf_count}
    Consider the setup from Part~II of the proof of \Cref{prop:train_all_own_fail}.
    Define 
    \begin{align*}
        \hat{\theta}_{\nomAbbrev, \step} = \Sigma_{\nomAbbrev , \step} \sum_{i=1}^{t - 1} w_{\nomAbbrev, \step} \nomCtxt{\nomAbbrev}{i}{} \rwrd_{i}
        \, ,
        \quad
        \Sigma_{\nomAbbrev, \step} = (\lambda I + \sum_{i=1}^{\step - 1} w_{\nomAbbrev, i} \nomCtxt{\nomAbbrev}{i}{} \nomCtxt{\nomAbbrev}{i}{}^\top)^{-1}
        \, ,
    \end{align*}
    with $\lambda \geq 0$ fixed,
    and $\theta_\nomAbbrev^\star = (\E [ w_{\nomAbbrev, \action} \nomCtxt{\nomAbbrev}{}{\action} \nomCtxt{\nomAbbrev}{}{\action}^\top ])^{-1} \E [w_{\nomAbbrev, \action} \nomCtxt{\nomAbbrev}{}{\action} \rwrd_\action]$, $\action \sim \unif(\actionSpace)$.
    Then $\hat{\theta}_{\nomAbbrev, \step} \to \theta_\nomAbbrev^\star$ in probability, if $\limsup \nSteps^{-1} \E [\regret_{\nSteps}^{\twostage}] \to 0$.
\end{lemma}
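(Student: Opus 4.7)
The plan is to couple the adaptively-generated empirical sums defining $\hat{\theta}_{\nomAbbrev,\step}$ with the i.i.d.\ sums one would obtain by always playing the optimal arm, and then use the sublinear-regret hypothesis to control the coupling error. The key structural observation is that in the Part~II setup the contexts are i.i.d., and the optimal arm $\action^\star = \action^\star(X)$ is a deterministic function of the context (under $X_{(j)}$ only $\action_{(j)}$ has a strictly positive mean reward), so the ``oracle'' sequence $(w_{\nomAbbrev, \action_i^\star}, \nomCtxt{\nomAbbrev}{i}{\action_i^\star}, \rwrd_{i, \action_i^\star})_i$ is i.i.d.\ and uniformly bounded.

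Rewriting $\hat{\theta}_{\nomAbbrev,\step} = (\tfrac{\lambda}{\step - 1}I + S_\step)^{-1} U_\step$ with $S_\step \coloneqq (\step-1)^{-1}\sum_{i<\step} w_{\nomAbbrev,i} \nomCtxt{\nomAbbrev}{i}{} \nomCtxt{\nomAbbrev}{i}{}^\top$ and $U_\step \coloneqq (\step-1)^{-1}\sum_{i<\step} w_{\nomAbbrev,i} \nomCtxt{\nomAbbrev}{i}{} \rwrd_i$, I would introduce the oracle counterparts $\tilde{S}_\step, \tilde{U}_\step$ in which $\action_i^\star$ replaces $\action_i$ everywhere. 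The strong law of large numbers then delivers $\tilde{S}_\step \to A \coloneqq \E[w_{\nomAbbrev, \action^\star} \nomCtxt{\nomAbbrev}{}{\action^\star} \nomCtxt{\nomAbbrev}{}{\action^\star}^\top]$ and $\tilde{U}_\step \to b \coloneqq \E[w_{\nomAbbrev, \action^\star} \nomCtxt{\nomAbbrev}{}{\action^\star} \rwrd_{\action^\star}]$ almost surely. Since features are rows of the fixed $X$, rewards lie in $[0,1]$, and weights are in $\{0,1\}$, each summand has operator norm bounded by a universal constant $C$, so $\|S_\step - \tilde{S}_\step\|_{\mathrm{op}} \leq \tfrac{2C}{\step-1}\sum_{i<\step}\mathbf{1}\{\action_i \neq \action_i^\star\}$, and analogously for $U_\step$. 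Converting the hypothesis via the minimum reward gap $\Delta \coloneqq \min_i \meanRwrd_i > 0$---which yields the per-step bound $\E[\rwrd_i^\star - \rwrd_i] \geq \Delta\,\prob(\action_i \neq \action_i^\star)$---gives $\sum_{i<\step}\prob(\action_i \neq \action_i^\star) \leq \Delta^{-1}\E[\regret_\step^{\twostage}] = \littleO(\step)$, so taking expectations the coupling errors vanish in $L^1$, hence in probability. Combined with the almost sure convergence $\tilde{S}_\step \to A$, $\tilde{U}_\step \to b$, this gives $S_\step \to A$ and $U_\step \to b$ in probability.

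To conclude, it remains to (i)~identify $A^{-1}b$ with the lemma's $\theta_\nomAbbrev^\star$, which is defined via $\action \sim \unif(\actionSpace)$ rather than $\action = \action^\star$, and (ii)~pass the probabilistic limit through matrix inversion. For (i), under $X_{(j)}$ only arm $\action_{(j)}$ has non-zero restricted features, so averaging $w_{\nomAbbrev,\action}\nomCtxt{\nomAbbrev}{}{\action}\nomCtxt{\nomAbbrev}{}{\action}^\top$ over $\action \sim \unif(\actionSpace)$ retains only the $\action = \action_{(j)}$ terms; this yields $\E_{\action \sim \unif(\actionSpace)}[w_{\nomAbbrev,\action}\nomCtxt{\nomAbbrev}{}{\action}\nomCtxt{\nomAbbrev}{}{\action}^\top] = |\actionSpace|^{-1} A$ and analogously for the right-hand side, so the common $|\actionSpace|$ factor cancels in the ratio. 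For (ii), a direct computation shows $A$ is invertible in the Part~II design (e.g., a full-rank diagonal matrix under `training-on-all'), so matrix inversion is continuous at $A$, and the continuous mapping theorem applied to $\hat{\theta}_{\nomAbbrev,\step} = (\tfrac{\lambda}{\step-1}I + S_\step)^{-1} U_\step$ delivers $\hat{\theta}_{\nomAbbrev,\step} \to A^{-1}b = \theta_\nomAbbrev^\star$ in probability.

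The main obstacle I expect is the coupling bound: one must be careful to verify that the per-index discrepancy between $S_\step$ and $\tilde{S}_\step$ is uniformly bounded in operator norm, and that the same argument simultaneously covers both the `training-on-all' and `training-on-own' weightings. Here this is immediate because the feature design is finite and deterministic and weights take only values in $\{0,1\}$, but it signposts where a generalization beyond linear models would require either truncation or additional moment assumptions on $\nomCtxt{\nomAbbrev}{}{\action}$ and $\rwrd_\action$.
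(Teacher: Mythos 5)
Your proposal is correct and follows essentially the same route as the paper's proof: both reduce to a law-of-large-numbers statement for the i.i.d.\ optimal-arm terms, convert the sublinear-regret hypothesis into a vanishing fraction of rounds where $\action_i \neq \action_i^\star$ via the gap $\Delta = \min_i \meanRwrd_i$, and conclude with the continuous mapping theorem for matrix inversion. The only (cosmetic) difference is that you control the discrepancy through a single global coupling bound, whereas the paper tracks per-arm counts $S_{\step(j)}$ versus $S_{\step(j)}^\star$ and shows their ratio tends to one.
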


\begin{proof}
    Since $\nomCtxt{\nomAbbrev}{\step}{} = 0$ unless $\action_\step = \action_\step^\star$ by construction, $\hat{\theta}_{\nomAbbrev, \step+1}$ is equal to
    \begin{align*}
        \biggl(
            \frac{\lambda}{\step} I
            +
            \frac{1}{\step}
            \sum_{i=1}^{\step}
            \sum_{j=1}^{|\actionSpace|}
                \ind \{ \action_i^\star = \action_i = \action_{(j)} \}
                w_{\nomAbbrev, i}
                \nomCtxt{\nomAbbrev}{i}{}
                \nomCtxt{\nomAbbrev}{i}{}^\top
        \biggr)^{-1}
        \frac{1}{\step}
        \sum_{i=1}^{\step}
        \sum_{j=1}^{|\actionSpace|}
            \ind \{ \action_i^\star = \action_i = \action_{(j)} \}
            w_{\nomAbbrev, i}
            \nomCtxt{\nomAbbrev}{i}{}
            \rwrd_{i}
        \, .
    \end{align*}
    Define $S_{\step (j)}^\star \coloneqq \sum_{i=1}^{\step - 1} \ind \{ \action_i^\star = \action_{(j)} \} w_{\nomAbbrev, i}$
    for each $\action_{(j)} \in \actionSpace$, and take, for example, the term 
    \begin{align*}
        \sum_{j=1}^{|\actionSpace|}
            \frac{S_{\step (j)}^\star}{\step - 1}
            \frac{1}{S_{\step (j)}^\star}
            \sum_{i=1}^{\step - 1}
                \ind \{ \action_i^\star = \action_i = \action_{(j)} \}
                w_{\nomAbbrev, i}
                \nomCtxt{\nomAbbrev}{i}{}
                \rwrd_{i}
        \, .
    \end{align*}
    Since $\action_\step^\star \overset{\text{i.i.d.}}{\sim} \unif(\actionSpace)$ by construction, $\frac{S_{\step (j)}^\star}{\step - 1} \to |\actionSpace|^{-1}$ a.s.\ by the strong law of large numbers, and $S_{\step (j)} \to \infty$ a.s.\ by the second Borel-Cantelli lemma.
    Furthermore, defining $S_{\step (j)} \coloneqq \sum_{i=1}^{\step - 1} \ind \{ \action_i^\star = \action_i = \action_{(j)} \} w_{\nomAbbrev, i}$, $S_{\step (j)}^\star - S_{\step (j)} \geq 0$ is the number of `$\action_{(j)}$ mistakes', and is associated with positive regret when the inequality is strict.
    Observe that we must have $\step^{-1} (S_{\step (j)}^\star - S_{\step (j)}) \to 0$ in probability, as otherwise there would be $c, \epsilon > 0$ such that $\limsup \prob( \step^{-1}( S_{\step (j)}^\star - S_{\step (j)} ) > \epsilon ) > c$,
    implying
    \begin{align*}
        \limsup \nSteps^{-1} \E [\regret_\nSteps^{\twostage}]
        &\geq 
        \limsup
        \nSteps^{-1} \E [\regret_\nSteps^{\nomin}]
        \geq 
        \Delta
        \limsup
        \E \biggl[ \tfrac{S_{\nSteps (j)}^\star - S_{\nSteps (j)}}{\nSteps} \biggr]
        >
        \Delta c \epsilon
        > 
        0
        \, ,
    \end{align*}
    which contradicts the assumption $\limsup \nSteps^{-1} \E [\regret_\nSteps^\twostage] \to 0$ (recall $\Delta = \min_i \meanRwrd_i > 0$).
    
    Finally, $\step^{-1} (S_{\step (j)}^\star - S_{\step (j)}) \to 0$ implies $\frac{S_{\step (j)}}{S_{\step (j)}^\star} \to 1$ and $S_{\step (j)} \to \infty$ in probability, and therefore
    \begin{align*}
        \frac{S_{\step (j)}^\star}{\step - 1}
        \sum_{j=1}^{|\actionSpace|}
        \frac{S_{\step (j)}}{S_{\step (j)}^\star}
        \frac{1}{S_{\step (j)}}
        \sum_{i=1}^{\step - 1}
            \ind \{ \action_i^\star = \action_i = \action_{(j)} \}
            w_{\nomAbbrev, i}
            \nomCtxt{\nomAbbrev}{i}{}
            \rwrd_{i}
        \, \to \,
        \E_{\action \sim \unif(\actionSpace)} [
            w_{\nomAbbrev, \action}
            \nomCtxt{\nomAbbrev}{}{\action}
            \rwrd_{\action}
        ]
        \, ,
    \end{align*}
    in probability by the law of large numbers, the continuous mapping theorem, and $|\actionSpace| < \infty$.
    Since an analogous argument can be made for the covariance term, and $A \mapsto A^{-1}$ is continuous on the space of invertible matrices, $\hat{\theta}_{\nomAbbrev, \step} \to \theta_\nomAbbrev^\star$ in probability by the continuous mapping theorem, as desired.

\end{proof}

\section{Experimental details}\label{app:implementation}

The experiments were implemented in Python \citep{vanrossum2009python}, using the following packages:
\href{https://github.com/abseil/abseil-py}{abseil-py},
\href{https://www.h5py.org/}{h5py},
HuggingFace Transformers \citep{wolf2020transformers},
JAX \citep{bradbury2018jax},
Jupyter \citep{kluyver2016jupyter},
matplotlib \citep{hunter2007matplotlib},
numpy \citep{harris2020numpy},
Pandas \citep{reback2020pandas,mckinney2010data},
PyTorch \citep{paszke2019pytorch},
scikit-learn \citep{pedregosa2011scikit},
scipy \citep{virtanen2020scipy},
seaborn \citep{waskom2021seaborn},
\href{https://github.com/tqdm/tqdm}{tqdm}.
The bandit experiments in \Cref{sect:comparison} were run in an embarrassingly parallel fashion on an internal academic CPU cluster running CentOS and Python 3.8.3.
The MoE experiments in \Cref{sect:moe} were run on a single desktop GPU (Nvidia GeForce GTX 1080).
While each experiment took under five minutes (most under two), we evaluated hundreds of thousands of different parameter configurations (including random seeds in the count) over the course of this work.
Due to internal scheduling via \href{https://slurm.schedmd.com/documentation.html}{slurm} and the parallel execution, we cannot determine the overall total CPU hours consumed for the experiments in this work.

Besides the UCB and Greedy results reported in the main text, some of the experiments we ran also included \emph{policy gradient} (PG) where at each step $\step$, the agent takes a single gradient step along $\nabla \E_{\ctxt} \E_{\action \sim \policy(\ctxt)} \rwrd_\action = \E_{\ctxt} \E_{\action \sim \policy(\ctxt)} \rwrd_\action \nabla \log \policy_\action (\ctxt)$ where the policy is parametrised by logistic regression, i.e., $\log \policy_{\action} (\ctxt) = \langle \theta , \ctxt_\action \rangle - \log \sum_{\action'} \exp \{ \langle \theta , \ctxt_{\action'} \rangle \}$, and the expectations are approximated with the \emph{last} observed tuple $(\ctxt_\step, \action_\step, \rwrd_\step)$.
PG typically performs much worse than UCB and Greedy in our experiments which is most likely the result of not using a replay buffer, or any of the other standard ways of improving PG performance.
We eventually decided not include the PG results in the main paper as they are not covered by the theoretical investigation in \Cref{sect:emp_observations}.

For the bandit experiments, the arm pools $\{ \actionPool{\nomAbbrev} \}_\nomAbbrev$, and feature subsets $s < d$, were divided to minimize overlaps between the individual nominators.
The corresponding code can be found in the methods \texttt{get\_random\_pools} and \texttt{get\_random\_features} within \texttt{run.py} of the supplied code:
\begin{itemize}[topsep=0pt]
    \item \textbf{Pool allocation:} 
    Arms are randomly permuted and divided into $\nNomin$ pools of size $\lfloor |\actionSpace| / \nNomin \rfloor$ (floor). 
    Any remaining arms are divided one by one to the first $|\actionSpace| - \nNomin \lfloor |\actionSpace| / \nNomin \rfloor$ nominators.
    
    \item \textbf{Feature allocation:}
    Features are randomly permuted and divided into $\nNomin$ sets of size $s' = \min \{ s, \lfloor d / \nNomin \rfloor \}$.
    If $s' < s$, the $s - s'$ remaining features are chosen uniformly at random without replacement from the $d - s'$ features not already selected.
\end{itemize}
To adjust for the varying dimensionality, the regularizer $\lambda$ was multiplied by the input dimension for UCB and Greedy algorithms, throughout.
The $\lambda$ values reported below are prior to this scaling.

\subsection{Synthetic bandit experiments (\texorpdfstring{\Cref{fig:one_two_stage_comparison_synth}}{Figure~2})}
\label{app:synth_impl}

\textbf{Hyperparameter sweep:} We used the single-stage setup, no misspecification ($d = s$), 100 arms, $d = 20$ features, and $0.1$ reward standard deviation, to select hyperparameters from the grid in \Cref{tab:sweep_synth}, based on the average regret at $\nSteps = 1000$ rounds estimated using $30$ different random seeds. 

\begin{table}[htbp]
  \caption{Hyperparameter grid for the synthetic dataset. 
  Bold font shows the selected hyperparameters.}
  \label{tab:sweep_synth}
  \centering
  \begin{tabular}{lll}
    \toprule
    algorithm     & parameter     & values \\
    \midrule
    UCB & regularizer $\lambda$  & $[10^{-4}, 10^{-3}, \mathbf{10^{-2}}, 10^{-1}, 10^0, 10^1, 10^2]$     \\
    & exploration bonus $\alpha$ & $[10^{-4}, 10^{-3}, \mathbf{10^{-2}}, 10^{-1}, 10^0, 10^1, 10^2]$ \\
    \midrule
    Greedy     & regularizer $\lambda$ & $[10^{-4}, 10^{-3}, \mathbf{10^{-2}}, 10^{-1}, 10^0, 10^1, 10^2]$      \\
    \midrule
    PG  & learning rate  & $[10^{-4}, 10^{-3}, 10^{-2}, 10^{-1}, \mathbf{10^0}, 10^1, 10^2]$  \\
    \bottomrule
  \end{tabular}
\end{table}

With the hyperparameters fixed, we ran $30$ independent experiments for each configuration of the UCB, Greedy, and PG algorithms in the single-stage case, and `UCB+UCB', `UCB+PG', `UCB+Greedy', `PG+PG', and `Greedy+Greedy' in the two-stage one.
Other settings we varied are in \Cref{tab:config_synth}.
The `misspecification' $\rho$ was translated into the nominator feature dimension via $s = \lfloor d / \rho \rfloor$.
For the nominator count $\nNomin$, the configurations with $\nNomin > |\actionSpace|$ were not evaluated.

\begin{table}[htbp]
  \caption{Evaluated configurations for the synthetic dataset.}
  \label{tab:config_synth}
  \centering
  \begin{tabular}{ll}
    \toprule
    parameter     & values \\
    \midrule
    arm count $|\actionSpace|$ & $[10, 100, 10000]$     \\
    feature count $d$ & $[5, 10, 20, 40, 80]$ \\
    nominator count $\nNomin$ & $[2, 5, 10, 20]$ \\
    reward std.\ deviation & $[0.01, 0.1, 1.0]$      \\
    misspecification $\rho$  & $[0.2, 0.4, 0.6, 0.8, 1.0]$  \\
    \bottomrule
  \end{tabular}
\end{table}

\subsection{Amazon bandit experiments (\texorpdfstring{\Cref{fig:one_two_stage_comparison_amazon}}{Figure~3})}
\label{app:amazon_impl}

The features were standardized by computing the mean and standard deviation over \emph{all} dimensions.

\textbf{Hyperparameter sweep:} We used the single-stage setup, $s = 50$ features,
$100$ arms, to select hyperparameters from the grid in \Cref{tab:sweep_amazon}, based on the average regret at $\nSteps = 5000$ rounds estimates using $30$ different random seeds.

\begin{table}[htbp]
  \caption{Hyperparameter grid for the Amazon dataset. 
  Bold font shows the selected hyperparameters.}
  \label{tab:sweep_amazon}
  \centering
  \begin{tabular}{lll}
    \toprule
    algorithm     & parameter     & values \\
    \midrule
    UCB & regularizer $\lambda$  & $[10^{-4}, 10^{-3}, 10^{-2}, 10^{-1}, \mathbf{10^0}, 10^1, 10^2]$     \\
    & exploration bonus $\alpha$ & $[10^{-4}, \mathbf{10^{-3}}, 10^{-2}, 10^{-1}, 10^0, 10^1, 10^2]$ \\
    \midrule
    Greedy     & regularizer $\lambda$ & $[10^{-4}, 10^{-3}, 10^{-2}, 10^{-1}, \mathbf{10^0}, 10^1, 10^2]$      \\
    \midrule
    PG  & learning rate  & $[10^{-4}, 10^{-3}, 10^{-2}, 10^{-1}, 10^0, \mathbf{10^1}, 10^2]$  \\
    \bottomrule
  \end{tabular}
\end{table}

With the hyperparameters fixed, we again ran 30 independent experiments for the same set of algorithms as in \Cref{app:synth_impl}, but now with fixed $d = 400$ as described in \Cref{sect:emp_observations}.
Since $d$ is fixed, we vary the nominator feature dimension $s$ directly.
Other variables are described in \Cref{tab:config_amazon}.

\begin{table}[htbp]
  \caption{Evaluated configurations for the Amazon dataset.}
  \label{tab:config_amazon}
  \centering
  \begin{tabular}{ll}
    \toprule
    parameter     & values \\
    \midrule
    arm count $|\actionSpace|$ & $[10, 100, 1000]$     \\
    nominator count $\nNomin$ & $[2, 5, 10, 20]$ \\
    nominator feature dim $s$ & $[5,  10,  20,  40,  80, 150, 250, 400]$  \\
    \bottomrule
  \end{tabular}
\end{table}

\subsection{Mixture-of-Experts offline experiments (\texorpdfstring{\Cref{sect:moe}}{Section~4})}
\label{app:moe_impl}

\textbf{Hyperparameter sweep:} We ran a separate sweep for the MoE and the random pool models using $100$ arms, $\nNomin = 10$ experts, and $c_k = 500$ training examples per arm.
We swept over optimizer type (\texttt{`RMSProp'} \citep{hinton2012neural}, \texttt{`Adam'} \citep{kingma2015adam}), learning rate ($[0.001, 0.01]$), and likelihood variance $\sigma^2$ ($[0.01, 1.0]$).
The selection was made based on average performance over three distinct random seeds.
The $0.01$ learning rate was best for both models.
\texttt{`RMSProp'} and $\sigma^2 = 1$ were the best for the random pool model, whereas
\texttt{`Adam'} and $\sigma^2 = 0.01$ worked better for the MoE, except for the embedding dimension $d_e = 50$ where $\sigma^2 = 1$ had to be used to prevent massive overfitting.

\textbf{Evaluation:}
We varied the number of training examples per arm $c_k \in \{ 100, 500, 750 \}$, number of dimensions of the BERT embedding revealed to the nominators $s \in \{ 10, 25, 50, 100 \}$, the dimension of the learned item embeddings $d_e \in \{ 5, 10, 50 \}$, and the number of experts $\nNomin \in \{5, 10, 20\}$.
We used $50000$ optimization steps, batch size of $4096$ to adjust for the scarcity of positive labels, and no early stopping.
Three random seeds were used to estimate the reported mean and standard errors.

\section{Additional results}\label{app:results}

\subsection{Synthetic bandit experiments (\texorpdfstring{\Cref{fig:one_two_stage_comparison_synth}}{Figure~2})}
\label{app:synth_results}

The interpretation of all axes and the legend is analogous to that in \Cref{fig:one_two_stage_comparison_synth}, except the relative regret (divided by that of the uniformly guessing agent) is reported as in \Cref{fig:one_two_stage_comparison_amazon}.

\begin{figure}[h]
  \centering
  \includegraphics[keepaspectratio,width=\columnwidth]{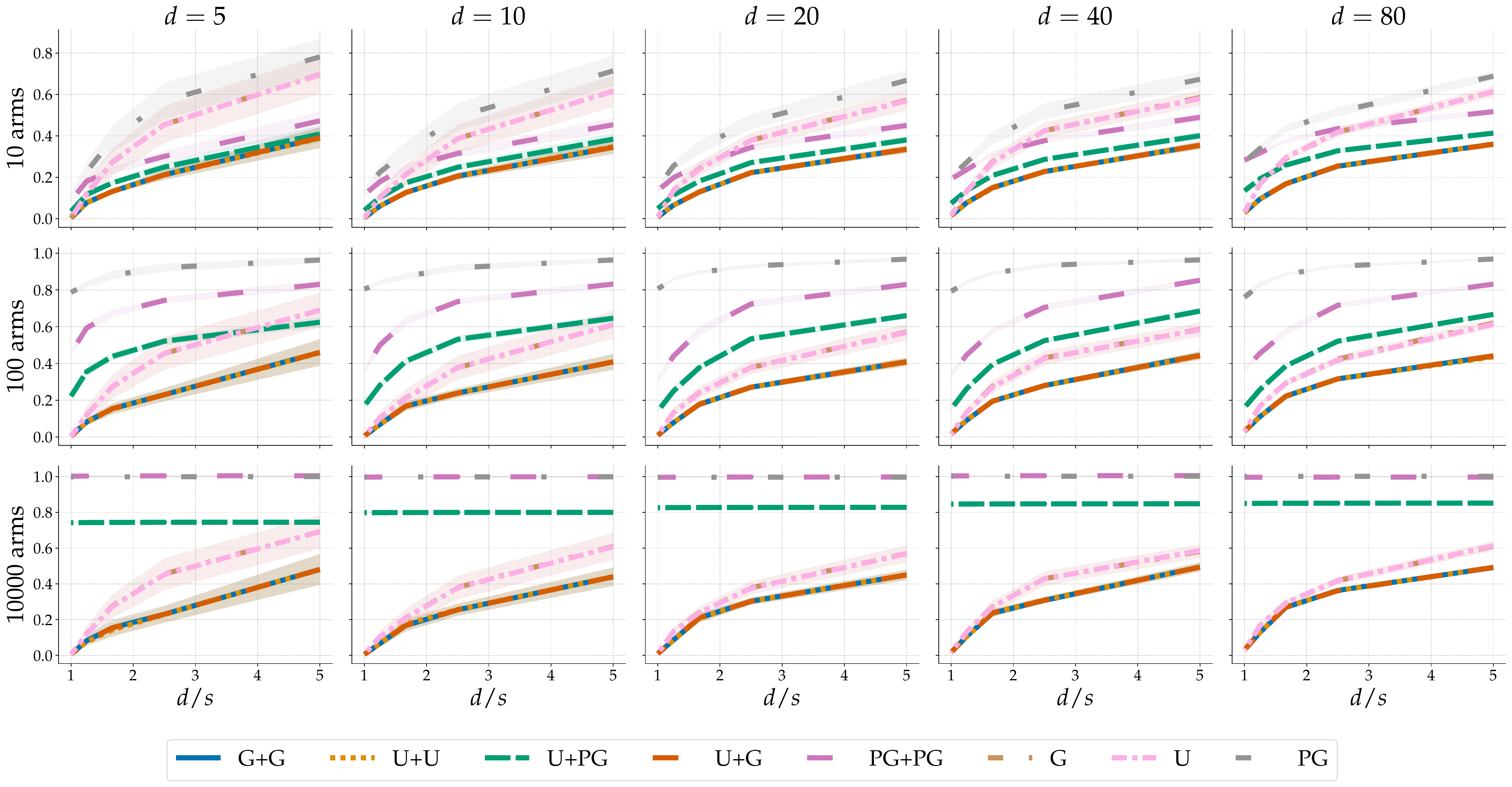}
  \caption{Relative regret for \emph{two} nominators on the synthetic dataset.}
\end{figure}

\begin{figure}[h]
  \centering
  \includegraphics[keepaspectratio,width=\columnwidth]{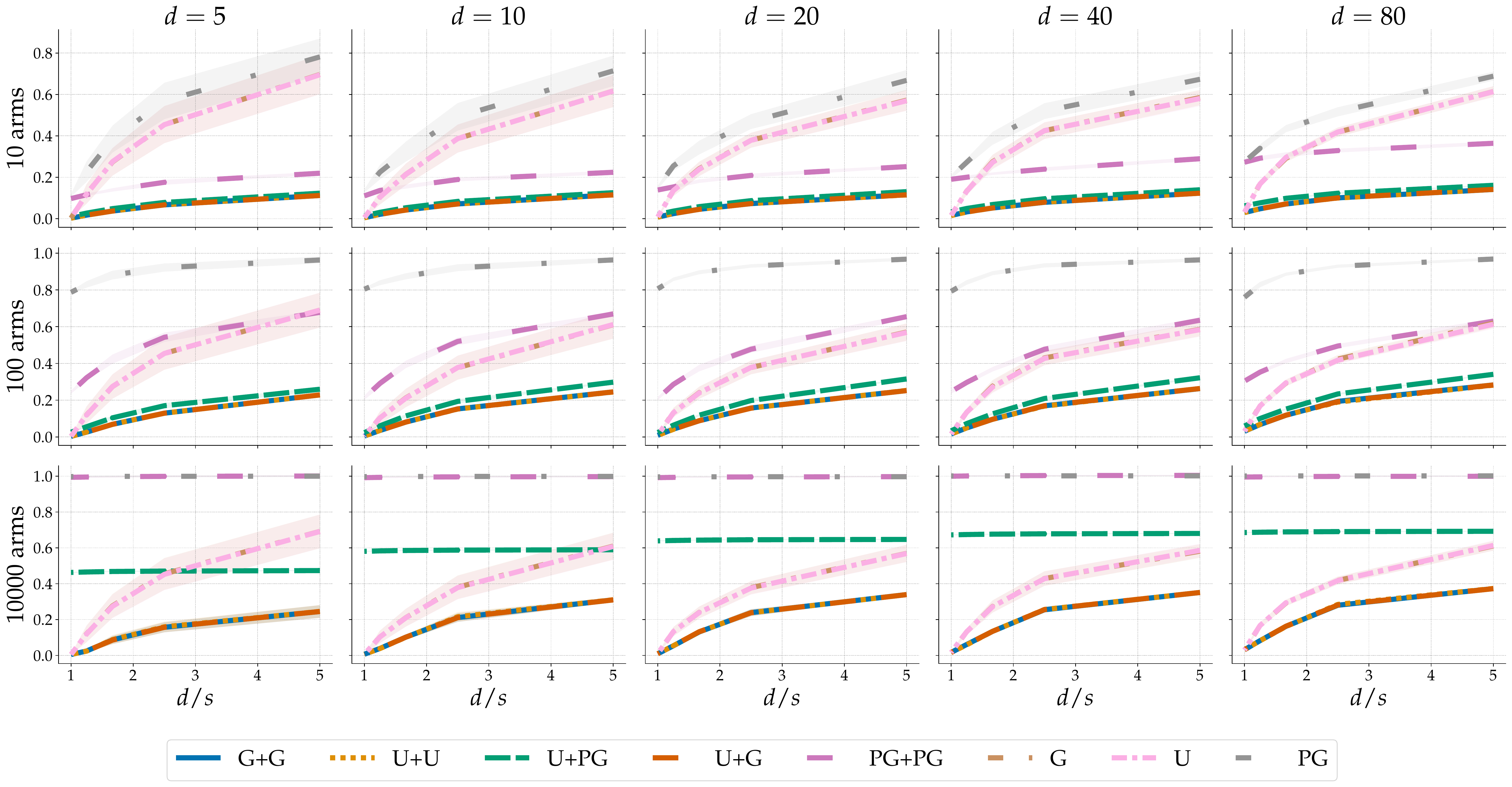}
  \caption{Relative regret for \emph{five} nominators on the synthetic dataset.}
\end{figure}

\begin{figure}[h]
  \centering
  \includegraphics[keepaspectratio,width=\columnwidth]{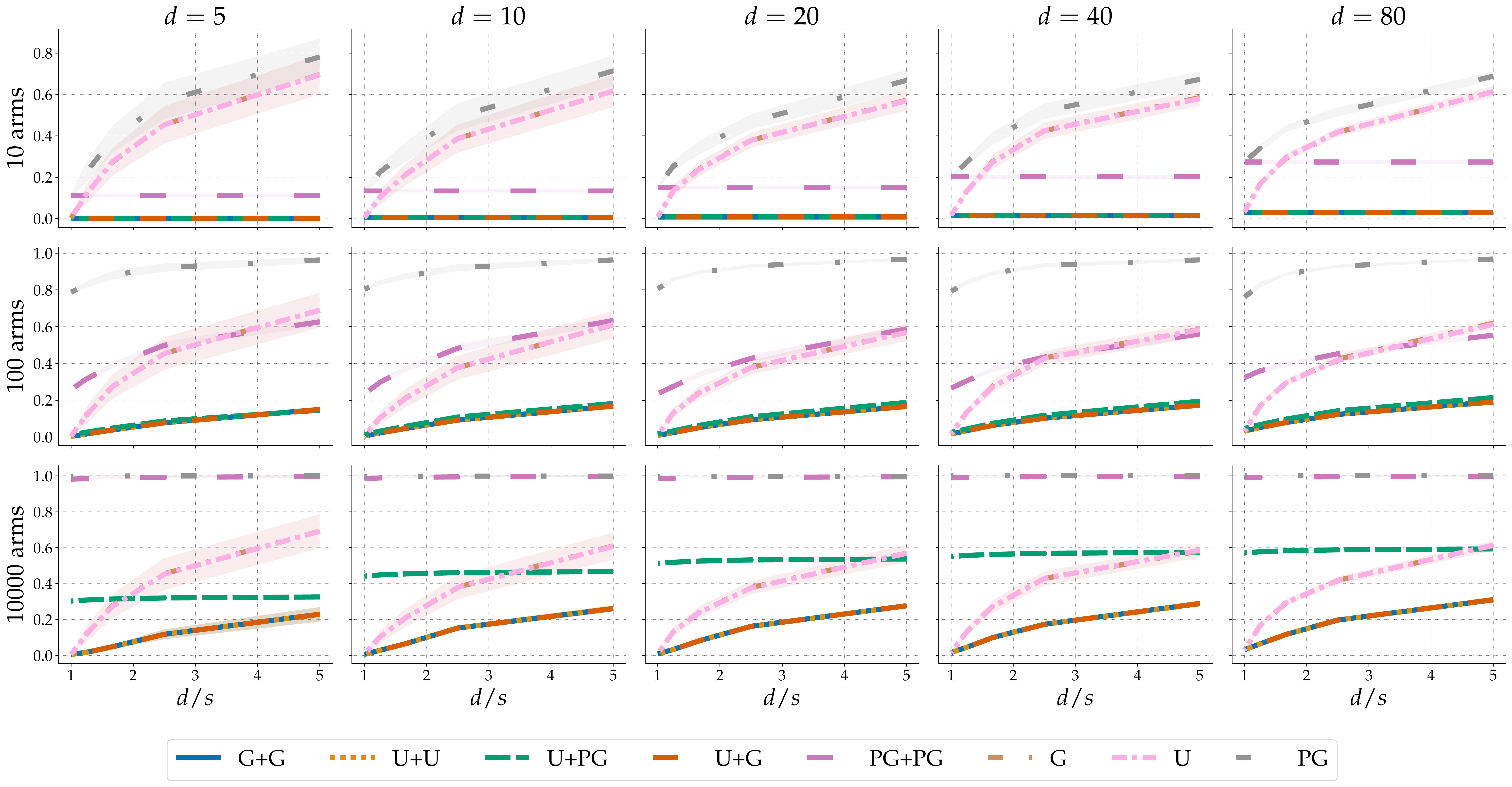}
  \caption{Relative regret for \emph{ten} nominators on the synthetic dataset.}
\end{figure}

\subsection{Amazon bandit experiments (\texorpdfstring{\Cref{fig:one_two_stage_comparison_amazon}}{Figure~3})}
\label{app:amazon_results}

\begin{figure}[h]
  \centering
  \includegraphics[keepaspectratio,width=0.8\columnwidth]{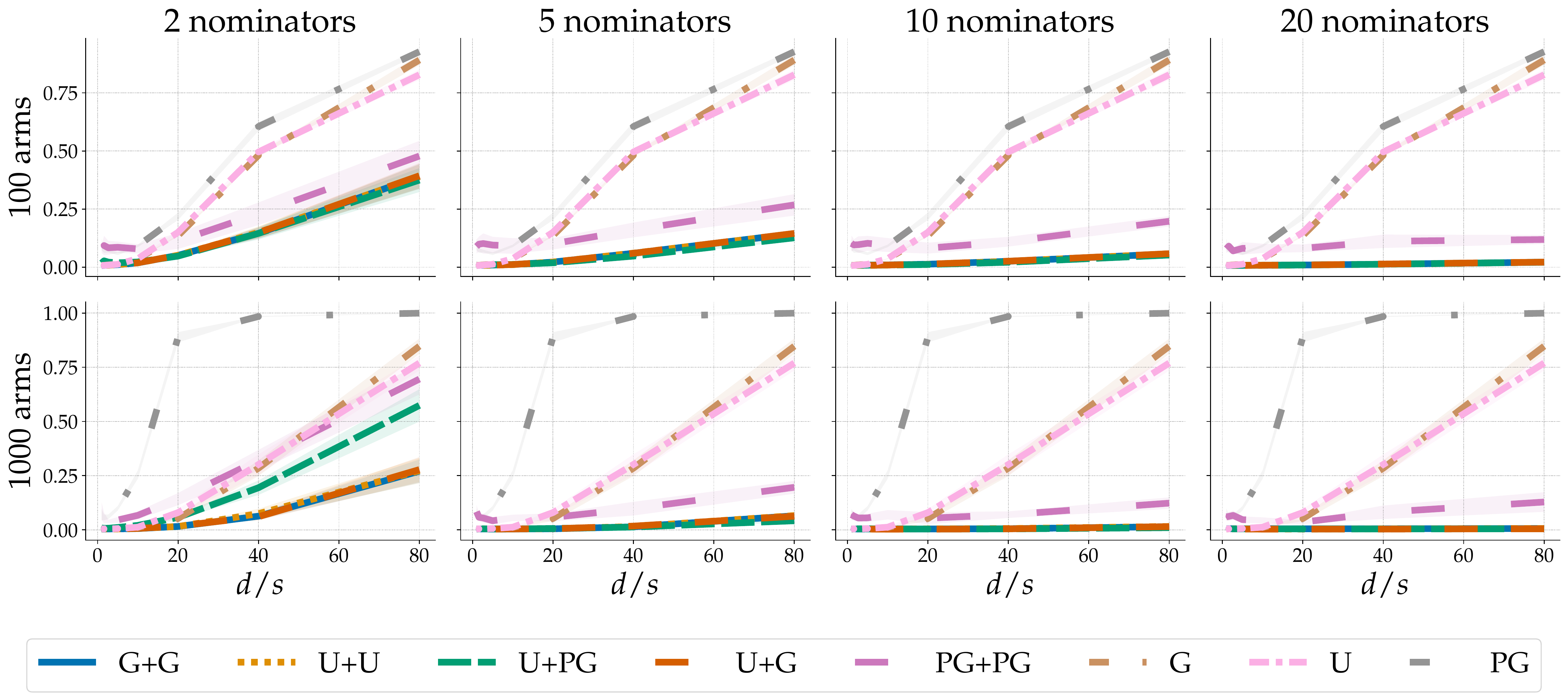}
  \caption{Relative regret on the Amazon dataset.}
\end{figure}


\end{document}